\documentclass{amsart}

\usepackage{amssymb}
\usepackage{amsthm}
\usepackage{amsmath,amsfonts,amsthm}
\usepackage{thmtools}
\usepackage{etoolbox}
\usepackage{mathtools}
\usepackage{soul}
\usepackage{tcolorbox}
\usepackage{times}
\usepackage{algorithm}
\usepackage{algorithmicx}
\usepackage{algpseudocode}
\usepackage{tabularx}
\usepackage{caption}
\usepackage{latexsym} 
\usepackage{fleqn}
\usepackage{tikz}
\usepackage{subfiles}
\usepackage{microtype}
\usepackage{url}

\usepackage[a4paper, total={6in, 8in}]{geometry}

\newtheorem{definition}{Definition}

\newtheorem{lemma}[definition]{Lemma}
\newtheorem{theorem}[definition]{Theorem}

\newtheorem{example}[definition]{Example}

\makeatletter
\newcommand{\multiline}[1]{%
  \begin{tabularx}{\dimexpr\linewidth-\ALG@thistlm}[t]{@{}X@{}}
    #1
  \end{tabularx}
}
\makeatother

\newcommand{\ar}{\ensuremath{\mathrm{ar}}}

\newcommand{\problemDef}[3]
{%
    
    \begin{tcolorbox}[arc=0.1mm,boxsep=-0.6mm,left=1.9mm,right=1.9mm,bottom=1.4mm,top=1.4mm,adjusted title={\strut \sc#1},colback=white!5]

    \noindent\textbf{Instance:} #2
    
    \noindent\textbf{Question:} #3
    \end{tcolorbox}
}

  \newcommand{\multiproblemDef}[4]
{%
    
    \begin{tcolorbox}[arc=0.1mm,boxsep=-0.6mm,left=1.9mm,right=1.9mm,bottom=1.4mm,top=1.4mm,adjusted title={\strut \sc#1},colback=white!5]

    \noindent\textbf{Instance:} #2

    \noindent\textbf{Parameters:} #3
    
    \noindent\textbf{Question:} #4
    \end{tcolorbox}
  }

\newcommand{\Ordo}{{\mathcal{O}}}

\newcommand{\para}[1]{\ensuremath{\text{para-}#1}}

\newcommand{\x}[1]{\ensuremath{\mathbf{X}#1}}

\newcommand{\E}{\ensuremath{\mathbf{E}}}
\newcommand{\FPE}{\ensuremath{\mathbf{FPE}}}
\newcommand{\FPT}{\ensuremath{\mathbf{FPT}}}
\newcommand{\XP}{\ensuremath{\mathbf{XP}}}

\newcommand{\parared}{\ensuremath{\leq^{\FPE}}}

\newcommand{\ered}{\ensuremath{\leq^{\E}}}

\newcommand*{\ew}{effective width}
\newcommand*{\Ew}{Effective width}

\author{Leif Eriksson}
\address[L. Eriksson]%
   {Dep. Computer and Information
     Science, \\Link\"opings  Universitet, Sweden
   }
\email{leif.eriksson@liu.se}   
\author{Victor Lagerkvist}
\address[V. Lagerkvist]%
   {Dep. Computer and Information
     Science, \\Link\"opings  Universitet, Sweden}
\email{victor.lagerkvist@liu.se}      

\title{A Multivariate Complexity Analysis of Qualitative Reasoning Problems}

\begin{document}
\maketitle

\begin{abstract}
Qualitative reasoning is an important subfield of artificial intelligence where one describes relationships with qualitative, rather than numerical, relations. Many such reasoning tasks, e.g., \textsc{Allen's interval algebra}, can be solved in $2^{\Ordo(n \cdot \log n)}$ time, but single-exponential running times $2^{\Ordo(n)}$ are currently far out of reach. 
In this paper we consider single-exponential algorithms via a {\em multivariate} analysis consisting of a {\em fine-grained} parameter $n$ (e.g., the number of variables) and a {\em coarse-grained} parameter $k$ expected to be relatively small.
We introduce the classes $\FPE$ and $\x{\E}$ of problems solvable in $f(k) \cdot 2^{\Ordo(n)}$, respectively $f(k)^n$, time, and prove several fundamental properties of these classes. We proceed by studying temporal reasoning problems and (1) show that the \textsc{partially ordered time} problem of effective width $k$ is solvable in $16^{kn}$ time and is thus included in $\x{\E}$, and (2) that the network consistency problem for \textsc{Allen's interval algebra} with no interval overlapping with more than $k$ others is solvable in $(2nk)^{2k} \cdot 2^{n}$ time and is included in $\FPE$. Our multivariate approach is in no way limited to these to specific problems and may be a generally useful approach for obtaining single-exponential algorithms.
\end{abstract}

\section{Introduction}
{\em Qualitative reasoning} is an important formalism in artificial intelligence where the objective is to reason about continuous properties given certain relations between the unknown entities, expressed qualitatively rather than quantitatively (e.g., numerically). Two important subfields are {\em spatial} reasoning, where basic objects e.g.\ may be regions in space defined in terms of other regions, and {\em temporal} reasoning, where one wishes to describe the relationship between time points, and more generally, between intervals of time points. Examples of the former include the \textsc{region connection calculus} and the \textsc{cardinal direction calculus}, and to exemplify the latter we may e.g. mention \textsc{Allen’s interval algebra}, the \textsc{branching time} problem,  and the \textsc{partially ordered time} problem. For many more examples and practical applications of these formalisms, cf.\ the survey by Dylla et al.~\cite{Dylla:2017:SQS:3058791.3038927}.
The classical complexity of qualitative reasoning is well understood: they are generally NP-hard but form non-trivial tractable fragments by restricting the types of basic relations. However, the tractable cases have limited strength, and in practice we therefore also need to solve NP-hard reasoning tasks. Hence, is there any hope of solving these intractable problems, or do we have to be contempt with heuristics?
Here, modern complexity theory tells a more nuanced story than the classical theory, where  NP-hardness is the beginning rather than the end. There are two prominent views: 

\begin{enumerate}
\item
Fine-grained complexity: for a {\em fine-grained} parameter $n$, e.g., the number of variables, how fast can the problem be solved, and are existing algorithms close to being optimal (given complexity theoretical assumptions such as {\em the (strong) exponential-time hypothesis} ((S)ETH))?
\item
Parameterized complexity: for a {\em coarse-grained} parameter $k$, e.g., tree-width, which problems are {\em fixed-parameter tractable} ($\FPT$) --- are in P if one is allowed unlimited computational resources (with respect to $k$) to preprocess the instance?
\end{enumerate}

Crucially,  $n$ is expected to grow with the size of the instance while $k$ is expected to stay  small. Hence, fine-grained and parameterized complexity are not competing methods but tackle different aspects of intractability, with diverging algorithmic toolboxes.
Unfortunately, neither approach seem fit for qualitative reasoning problems. On the one hand, they are solvable in $2^{\Ordo(n^2)}$ time or $2^{\Ordo(n \cdot \log n)}$ time in certain cases~\cite{lagerkvist2017d}, but we can currently only rule out subexponential $2^{o(n)}$ algorithms under  ETH~\cite{jonsson2021}. On the other hand, despite the immense success of parameterized complexity, there is a lack of natural $\FPT$ algorithms for qualitative reasoning, and we are only aware of a handful of less surprising examples such as tree-width~\cite{10.1016/j.jcss.2012.05.012,DBLP:conf/aaai/DabrowskiJOO21}.

Hence, are we asking the right questions by attacking these problems with parameterized and fine-grained complexity, or are current attempts misguided?
Could it be the case that they simply are too hard to admit natural $\FPT$ algorithms? Similarly, could it be the case that (e.g.) \textsc{Allen's interval algebra} is not solvable in single-exponential time while still being  fundamentally too different from 3-SAT to admit stronger lower bounds than $2^{o(n)}$? In this article we bridge the gap between fine-grained and parameterized complexity and attack this question by a {\em multivariate} complexity analysis of single-exponential time solvability: which NP-hard qualitative reasoning problems admit single-exponential time algorithms (with respect to a fine-grained parameter $n$) once preprocessed with respect to a coarse-grained parameter $k$? This approach is a natural continuation of Bringmann \& K\"unnemann~\cite{bringmann2018} who studied the complexity of the \textsc{longest common subsequence} problem in order to investigate which natural parameters could possibly break the SETH barrier. However, the exponential world of NP-hard qualitative reasoning problems is very different from the tractable \textsc{longest common subsequence} problem and requires new tools and techniques.
Our paper has two main contributions. First, in Section~\ref{sec:classes} we initiate a systematic investigation of complexity classes taking both a coarse-grained parameter $k$ and a fine-grained parameter $n$ into consideration. We identify two natural classes: $\FPE$, problems solvable in $f(k) \cdot 2^{\Ordo(n)}$ time, and $\x{\E}$, problems solvable in $f(k)^n$ time. Naturally, $\FPE$ is more desirable due to the complete separation of the parameter $n$ and $k$ and should be viewed as the exponential analogue of $\FPT$, while $\x{\E}$ corresponds to the less desirable class $\XP$ where the two parameters are intertwined. 
Second, in Section~\ref{sec:algorithms} we begin the multivariate analysis of single-exponential time solvability by classifying parameterized problems as belonging to $\FPE$, $\x{\E}$, or neither (under the ETH). We first consider the finite-domain {\em constraint satisfaction problem} (CSP) and manage to rule out inclusion in both $\FPE$ and $\x{\E}$ under the ETH for several natural parameterizations. We then turn to two well-known problems from temporal reasoning: the \textsc{partially ordered time} problem~\cite{Anger:etal:flairs99}, which is the problem of determining whether there exists a partial order subjected to a set of constraints over a variable set, and \textsc{Allen's interval algebra}~\cite{allen83}, the problem of determining whether a set of 2-dimensional intervals described by 1-dimensional temporal constraints over the  start- and end-points, is consistent or not. \textsc{Allen's interval algebra} has seen applications in e.g.\ planning~\cite{DBLP:conf/ijcai/AllenK83,DBLP:journals/dke/Dorn95,DBLP:conf/icra/MudrovaH15}, natural language processing~\cite{DBLP:conf/ijcai/DenisM11,DBLP:conf/aaai/SongC88}, and molecular biology~\cite{DBLP:journals/jacm/GolumbicS93}, while the \textsc{partially ordered time} problem occurs naturally in distributed systems where time is not totally ordered, cf.\ Lamport's classical interprocessor communication model~\cite{10.1145/5383.5384,10.1145/65979.65982}. Both of these problems can be solved in $2^{\Ordo(n \cdot \log n)}$ time by enumeration, and \textsc{Allen's interval algebra} is additionally known to admit an improved $\Ordo(1.0615n^{n})$ algorithm~\cite{lagerkvist2021b}.  For \textsc{partially ordered time} we consider {\em \ew} as parameter, which in strength lies between the well-known properties of {\em width} and {\em dimension}. Here, we construct an $\x{\E}$-algorithm with a running time dominated by $16^{kn}$ where $k$ is the {\ew} of the partial order. The algorithm is non-trivial and attempts to construct a partial order of {\ew} $k$ by guessing a suitable partition of the variable set and the ordering among all variables, which can be efficiently implemented via a recursive strategy making use of dynamic programming. For \textsc{Allen's interval algebra} we define a parameter $k$ based on the maximum number of overlapping intervals and construct a $(2nk)^{2k} \cdot 2^{n}$ time algorithm, and thus prove membership in $\FPE$. Hence, for instances where $k$ stays reasonably small our algorithm is effectively as good as a $2^{n}$-time algorithm, which is a major improvement over the aforementioned $\Ordo(1.0615n^{n})$ time algorithm. It is worth mentioning that both of these algorithms also solve the more involved problem of {\em counting} the number of solutions. 

Our work opens up several directions for future research, which we describe in greater detail in Section~\ref{sec:discussion}. Most importantly, which parameterized problems in qualitative reasoning, and artificial intelligence as a whole, belong to $\FPE$, and which belong to $\x{\E}$? 

\section{Preliminaries}

Throughout, let $\Sigma$ be an alphabet, i.e., a finite set of symbols. For $x \in \Sigma^*$ we let $|x|$ be its length.

\begin{definition}
  A {\em parameterized problem} is a subset of $\Sigma^* \times \mathbb{N}$ where $k \in \mathbb{N}$ is called the {\em parameter}.
\end{definition}  

The goal of parameterized complexity is then to describe the complexity in terms of the parameter, and, ideally,  design algorithms which decouples the parameter to  $|x|$.

\begin{definition}
  We introduce the following running times and the corresponding classes from parameterized complexity:
  \begin{enumerate}
  \item
    $f(k) \cdot |x|^{\Ordo(1)}$ time by a deterministic algorithm ($\FPT$),
  \item
    $|x|^{f(k)}$ time by a deterministic algorithm ($\XP$),
  \end{enumerate}
  where $f \colon \mathbb{N} \rightarrow \mathbb{N}$ is a computable function. 
\end{definition}

$\FPT$ can be seen as the parameterized version of P where the parameter is completely decoupled from the rest of the instance via the factor $f(k)$. The class $\XP$, in comparison, still yields a polynomial-time algorithm for every fixed $k$ but the dependency on $k$ is much worse.

If a {\em fine-grained} parameter such as the number of variable $n$ is used as parameter then the objective is typically to obtain precise upper and lower bounds on running times of the form $f(n) \cdot |x|^{\Ordo(1)}$, and the resulting field is sometimes called {\em fine-grained complexity}. If we let $\mathbf{SE}$ be the subclass of $\FPT$ allowing subexponential running times of the form \mbox{$2^{o(n)} \cdot |x|^{\Ordo(1)}$} then the conjecture that 3-SAT is not included in $\mathbf{SE}$ is known as the {\em exponential-time hypothesis} (ETH). 
The ETH is widely believed conjecture and is a powerful assumption for proving conditional lower bounds. However, sometimes stronger assumptions are necessary. For $k \geq 3$ let $c_k$ be the infimum of all constants $c$ such that $k$-\textsc{SAT} is solvable in $c^n \cdot |x|^{\Ordo(1)}$ time. The conjecture that the limit of the sequence $c_3, c_4, \ldots$ tends to $2$ is known as the {\em strong exponential-time hypothesis} (SETH). Importantly, the SETH implies that \textsc{CNF-SAT}, the satisfiability problem where we do not have any restrictions on clause lengths, is not solvable in $c^n \cdot |x|^{\Ordo(1)}$ time for {\em any} $c < 2$. 

Before introducing the qualitative reasoning problems under consideration in this paper we introduce the more general class of {\em constraint satisfaction problems} (CSPs).

\problemDef{CSP}
{$I = (V,C)$ where $V$ is a set of variables and $C$ a set of constraints over a domain $D$.}
{Does there exist a function $f:V \rightarrow D$ which satisfies all constraints, i.e., $(f(x_1), \ldots, f(x_m)) \in R$ for every $R(x_1, \ldots, x_m) \in C$?}

A set of relations $\Gamma$ naturally induces a problem CSP$(\Gamma)$ where constraints only uses relations from $\Gamma$.
For finite-domain constraint languages $\Gamma$ we represent relations by explicitly listing the tuples in the relation, but for infinite domains we represent relations by first-order formulas over a fixed set of base relations $\mathcal{B}$, i.e., each relation is defined as the set of models of a first-order formula, interpreted over $\mathcal{B}$. Additionally, if $\mathcal{B}$ is a set of relations of the same arity then we
\begin{enumerate}
  \item
    write $\mathcal{B}^{\vee k}$ for the set of all relations definable by disjunctions of arity $k$ over positive atoms from $\mathcal{B}$, i.e., an $n$-ary $R \in \mathcal{B}^{\vee k}$ if and only if $R(x_1, \ldots, x_n) \equiv R_1(\mathbf{x}_1) \lor \ldots \lor R_m(\mathbf{x}_m)$ where each $R_i \in \mathcal{B}$ and each $\mathbf{x}_i$ is a tuple of variables matching the arity of $R_i$, and
    \item
 write $\mathcal{B}^{\vee =}$ for the set of all relations that can be obtained by unions of the relations in $\mathcal{B}$, which via the logical perspective just corresponds to definitions $R(x_1, \ldots, x_n) \equiv R_1(\mathbf{x}) \lor \ldots \lor R_m(\mathbf{x})$ where  $\mathbf{x}$ is a tuple of variables matching the arity of $R_1, \ldots, R_m$. 
\end{enumerate}

\begin{definition}\label{partialorder}
A pair $(S, \leq)$ is a \emph{partial order} if $\leq$ is reflexive, asymmetric ($\forall x, y \in S$, $x \leq y \Rightarrow $ not $y \leq x$), and transitive.
\end{definition}

If $\odot \in \{<,>,||,=\}$ and $P = (S, \leq_P)$ is a partial order then we write $\odot_P$ for the relation induced by $P$: $x <_P y$ if $x \leq_P y$ and $y \leq_P x$ does not hold, conversely for $>_P$, $||_P$ if neither $x \leq_P y$ nor $y \leq_P x$, and $x =_P y$ if $x \leq_P y$ and $y \leq_P x$.
We will differ between relations in $P$ and the relations induced by $P$ in the same manner, i.e. $x\leq_Py$ is a relation in $P$, while $x<_Py$ is a relation induced by $P$.

\problemDef{Partially Ordered Time}
{A set of variables $V$ and a set of binary constraints $C$ where $c \subseteq \{<,>,||,=\}$ for each $c(x,y) \in C$.}
{Is there a partial order $P = (S, \leq)$ with ${|S|\leq |V|}$ and a function $f \colon V\rightarrow S$ such that for every constraint $c(x,y)\in C$, $f(x) \odot_P f(y)$ for some $\odot \in c$?}

For a \textsc{partially ordered time} instance $I = (V,C)$ and $X \subseteq V$ we write $I[X]=(X,C_X)$ for the sub-instance where $C_X = \{c(x,y) \in C \mid x,y \in X\}$. Similarly, if $P = (S, \leq)$ is a partial order and $S' \subseteq S$ then we by $P[S']$ denote the partial order induced by $P$ and $S'$.

The second major problem analysed in the paper is \textsc{Allen's interval algebra}:
define $a$ as the thirteen possible atomic relations between two intervals on the same line (see Table~\ref{tab:IAdesc} for precise definitions of these relations). The consistency problem for \textsc{Allen's interval algebra} can then be seen as CSP$(\mathcal{A})$ where $\mathcal{A} = a^{\vee =}$, i.e., each constraint can be expressed as a union of basic constraints.
We represent intervals as pairs of start- and end-points $X=(x^-,x^+)$ such that \mbox{$x^-<x^+$}.
In the same manner, for any CSP($\mathcal{A}$) $(V,C)$ we introduce the two auxiliary sets \mbox{$V^-=\{x^-\,|\,(x^-,x^+)\in V\}$} and \mbox{$V^+=\{x^+\,|\,(x^-,x^+)\in V\}$}. 

\begin{table}[h]
    \caption[]{The 13 basic relations between two intervals on the same line. ($i$ denotes the $i$nverse/converse of a relation.)}
    \centering
    \begin{tabular}{l|l|l}
         $\boldsymbol{Relation}$ & $\boldsymbol{Illustration}$ & $\boldsymbol{Interpretation}$ \\
         \hline
         $\boldsymbol{X<Y}$ & XXX& X precedes Y \\
         $\boldsymbol{Y>X}$ & $\,\,\,\,\,\,\,\,\,\,\,\,\,\,\,\,\,\,$YYY & \\
         \hline
         $\boldsymbol{X\,=\,Y}$ & XXXXXXX & X is equal to Y \\
          & YYYYYYY &  \\
         \hline
         $\boldsymbol{X\,m\,Y}$ & XXX & X meets Y \\
         $\boldsymbol{Y\,mi\,X}$ & $\,\,\,\,\,\,\,\,\,\,\,\,\,$YYY & \\
         \hline
         $\boldsymbol{X\,o\,Y}$ & XXXXX & X overlaps with Y \\
         $\boldsymbol{Y\,oi\,X}$ & $\,\,\,\,\,\,\,\,\,\,\,\,\,\,$YYYY & \\
         \hline
         $\boldsymbol{X\,s\,Y}$ & XXX & X starts Y \\
         $\boldsymbol{Y\,si\,X}$ & YYYYYYY & \\
         \hline
         $\boldsymbol{X\,d\,Y}$ & $\,\,\,\,\,\,\,\,\,$XXX & X during Y \\
         $\boldsymbol{Y\,di\,X}$ & YYYYYYY & \\
         \hline
         $\boldsymbol{X\,f\,Y}$ & $\,\,\,\,\,\,\,\,\,\,\,\,\,\,\,\,\,\,$XXX & X finishes Y \\
         $\boldsymbol{Y\,fi\,X}$ & YYYYYYY & \\
    \end{tabular}
    \label{tab:IAdesc}
\end{table}

\section{Multivariate Complexity Classes}
\label{sec:classes}

Our main interest are problems solvable in $2^{f(n)}$ time for $f \in \Ordo(n)$ (henceforth written $2^{\Ordo(n)}$). This naturally leads to the following parameterized variant of the classical complexity class $\E$, which is typically defined via the parameter $|x|$.

\begin{definition}
$\E$ is the class of parameterized problems solvable in $2^{\Ordo(n)} \cdot |x|^{\Ordo(1)}$ time.
\end{definition}

The choice of parameter $n$ greatly influences the existence of a $2^{\Ordo(n)}$ algorithm. For example,  if we use the number of constraints $m$ as the parameter then CSP$(\mathcal{A})$ is in $\E$ via a trivial backtracking algorithm, but it is only known to be solvable in $2^{\Ordo(n \cdot \log n)}$ time. Similarly, a $c^n \cdot |x|^{\Ordo(1)}$ algorithm for $c < 2$ for \textsc{CNF-SAT} would constitute a major break through in complexity theory, but if we instead use the number of clauses $m$ as the complexity parameter then the problem can be solved in $\Ordo(1.2226^m)$ time~\cite{CHU202151}.

\begin{example} \label{ex:e}
  The number of problems admitting non-trivial single-exponential time algorithms is vast and we only consider a handful of illustrative examples.
  \begin{itemize}
    \item
The \textsc{travelling salesman} problem is trivially solvable in $n! \cdot |x|^{\Ordo(1)}$ time but can be solved in $2^n \cdot |x|^{\Ordo(1)}$ time by the Held-Karp algorithm. Similarly, \textsc{graph coloring} can be solved in $k^n \cdot |x|^{\Ordo(1)}$ time, where $k$ is the number of colors, or $n^n \cdot |x|^{\Ordo(1)}$ time if analyzed only by the number of vertices $n$, but can be solved in $2^{n} \cdot |x|^{\Ordo(1)}$ time via an inclusion-exclusion based algorithm~\cite{DBLP:journals/siamcomp/BjorklundHK09}.
\item
  For each $k \geq 1$ let \textsc{$k$-coloring} be the problem of deciding whether an input graph can be colored with $k$ colors. Similarly, let $\textsc{coloring}$ be the variant of $\textsc{$k$-coloring}$ where $k$ is part of the input, i.e., an instance consists of a graph and a natural number $k$, and the question is whether the graph can be colored with $k$ colors. If we parameterize \textsc{$k$-coloring} by the number of vertices $n$, then, trivially, \textsc{$k$-coloring} is in $\E$ for each $k \geq 1$ since it is solvable in $k^n \cdot |x|^{\Ordo(1)}$ time, but this bound can be significantly strengthened to $2^{n} \cdot |x|^{\Ordo(1)}$  time~\cite{DBLP:journals/siamcomp/BjorklundHK09} which implies that $\textsc{coloring} \in \E$. 
 \item
 There exists many examples from parameterized algorithms where the goal is to take an $\FPT$-algorithm $f(k) \cdot |x|^{\Ordo(1)}$ and improve $f$ to a single-exponential function $2^{c \cdot k}$ where $c$ is as small as possible. Non-trivial examples include \textsc{$k$-vertex cover} which can be solved in $1.2738^k \cdot |x|^{\Ordo(1)}$ time~\cite{CHEN20103736} and the \textsc{Steiner tree problem} where $k$ bounds the set of terminal vertices in the input graph, which can be solved in $7.97^k \cdot |x|^{\Ordo(1)}$ time~\cite{DBLP:journals/siamdm/FominKLPS19}. 
 \item
   Define SNP as the class of problems expressible via second-order formulas of the form

   \[\exists R_1, \ldots, R_m, \forall x_1, \ldots, x_n \colon \varphi(R_1, \ldots, R_m, x_1, \ldots, x_n)\]

   where $\varphi$ is a quantifier-free formula with atoms from $R_1, \ldots, R_m$ and an input relational structure $\mathcal{R}$. It is then known that {\em every} problem in SNP is included in $\E$ with parameter $\Sigma^m_{i = 1} n^{\alpha_i}$ where $\alpha_i$ is the arity of $R_i$~\cite[Theorem 2]{IMPAGLIAZZO2001512}. 
\item    
    For infinite-domain examples, let $\Gamma^k$ be the set of all $k$-ary first-order definable relations over $(\mathbb{Q}; <)$ ({\em temporal} CSPs). Then CSP$(\Gamma^3)$ is solvable in $3^n \cdot |x|^{\Ordo(1)}$ time~\cite{lagerkvist2021b} but CSP$(\Gamma^4)$ is not in $\E$ under the ETH~\cite{DBLP:conf/mfcs/JonssonL18}. In contrast, if $\Delta^k$ is the set of all $k$-ary first-order definable relations over $(\mathbb{N}; =)$ ({\em equality} CSPs) then CSP$(\Delta^k)$ is solvable in  $(\frac{k(k-1)}{2})^n \cdot |x|^{\Ordo(1)}$ time~\cite{DBLP:conf/ijcai/JonssonL20} and is thus in $\E$.
  \item CSP$(\mathcal{B}^{\vee \omega})$, where $\mathcal{B}$ is a set of binary, jointly exhaustive and pairwise disjoint relations containing the equality relation over a countably infinite domain, and where
    $\mathcal{B}^{\vee \omega} = \bigcup_{k \geq 1} \mathcal{B}^{\vee k}$, is not in $\E$ under the SETH~\cite{lagerkvist2017d}.
\end{itemize}    
\end{example}

Importantly, no NP-hard qualitative reasoning tasks are known to be solvable in single-exponential time, necessitating a multivariate analysis where the complexity is measured with respect to several parameters.

\subsection{Multi-Parameterized Problems}

We begin by defining a multi-parameterized problem following the style frequently used in classical parameterized complexity.

\begin{definition}
  A {\em multi-parameterized problem} is a subset of $\Sigma^* \times \mathbb{N}^{c}$ where $n,k_1,\dots,k_{c-1} \in \mathbb{N}$ are called the {\em parameters}.
  A {\em bi-parameterized problem} is a multi-parameterized problem with only two parameters: $n$ and $k$.
\end{definition}

For simplicity we, given an instance $(x,n,k)$ of a bi-parameterized problem, will always view the first parameter $n$ as the fine-grained parameter, and the second parameter $k$ as the coarse-grained parameter.
Inspired by $\FPT$ and $\XP$ we then define the following bi-parameterized analogues.

\begin{definition}
We introduce the following two classes.
\begin{enumerate}
  \item
    $\FPE$: problems solvable in $f(k) \cdot 2^{\Ordo(n)} \cdot |x|^{\Ordo(1)}$ time, and
  \item
    $\x{\E}$: problems solvable in $f(k)^n \cdot |x|^{\Ordo(1)}$ time,
    \end{enumerate}
    where $f \colon \mathbb{N} \rightarrow \mathbb{N}$ is a computable function.
\end{definition}

Note that these running times are not symmetric with respect to the two parameters since we expect the parameter $k$ to stay reasonably small, while $n$, naturally, grows rapidly with the size of the instance. Hence, the common trick in parameterized complexity to collapse two parameters into a single parameter $n + k$ is not suitable for defining these classes.

\begin{example}
     Let us first consider finite-domain CSPs. Clearly, if the coarse-grained parameter is the size of the domain, and the fine-grained parameter is the number of variables, then the resulting problem is solvable in $k^n \cdot |x|^{\Ordo(1)}$ time and is thus in $\x{\E}$.  However, this problem is
    {\em not} in $\FPE$ unless the ETH fails (see Theorem~\ref{thm:csp_examples}). For a less trivial $\x{\E}$ example, consider the problem of finding a homomorphism between two graphs $G$ and $H$ over vertices $V_G$ and $V_H$.  If we let $n$ be $|V_G|$,  and  $k$ be the {\em tree-width} of $H$, then the problem is solvable in $(k+3)^n \cdot |x|^{\Ordo(1)}$ time, and if $n = \max \{V_G, V_H\}$ and $k$ is the {\em clique-width} of $H$ then it is solvable in \mbox{$(2k +1)^{n} \cdot |x|^{\Ordo(1)}$} time~\cite{DBLP:journals/mst/FominHK07,DBLP:journals/mst/Wahlstrom11}. Hence, both these parameterisations yield problems in $\x{\E}$.
    Faster algorithms are known for certain types of graphs $H$: if $H$ is the complete clique on $k$ vertices then finding a homomorphism to $H$ is tantamount to finding a $k$-coloring of $G$, which is well-known to be solvable in $\Ordo(2^{|V_G|} \cdot |x|^{\Ordo(1)})$  time~\cite{DBLP:journals/siamcomp/BjorklundHK09} (see Example~\ref{ex:e}). Hence, $k$-coloring parameterized by the number of vertices in the input graph is included in $\FPE$, but is
    arguably not the most interesting example since it is in $\E$ even if $k$ is part of the input.
    Interestingly, if we let the fine-grained parameter $n$ be the tree-width of $G$ and the coarse-grained parameter $k$ be the number of vertices in $H$, then the graph homomorphism problem is solvable in $k^{n+1} \cdot |x|^{\Ordo(1)}$ time via a straightforward dynamic programming algorithm (assuming we are also given a tree decomposition). However, this problem is not solvable in $\Ordo(((k - \varepsilon)^{n+1}) \cdot |x|^{\Ordo(1)})$ for {\em any} $\varepsilon > 0$ under the SETH~\cite{DBLP:journals/siamcomp/OkrasaR21}, and in particular it is not included in $\FPE$. 
\end{example}

\begin{example}
From Example~\ref{ex:e}, equality CSPs are in $\x{\E}$  when parameterized by maximum arity while temporal CSPs are not in $\x{\E}$ with the same parameterization.
More generally, if $\mathcal{B}$ is a so-called {\em partition scheme}, then the CSP problem where constraints are formed by disjunctions of arity at most $k$, using relations from $\mathcal{B}$, is in $\x{\E}$ when parameterized by the maximum variable degree~\cite{jonsson2021}. A non-trivial, related example can also be found in the \textsc{channel assignment problem} which is in $\x{\E}$ when parameterized by the edge-length $k$ of the input graph~\cite{MCDIARMID2003387}. 
\end{example}

Curiously, while $\x{\E}$-algorithms appear to be abundant in the literature, $\FPE$-algorithms are  much rarer and we are not aware of any non-trivial examples, i.e., problems included in $\FPE$ but not in any smaller class such as $\E$. 
However, before we turn to studying concrete problems in these classes we begin by establishing several fundamental properties.
In Section~\ref{sec:parae} we derive $\FPE$ as a class of the form $\para{\E}$ using standard constructions in parameterised complexity and prove that  $\FPE$ can equivalently be defined via running times of the form $n^{f(k)} \cdot 2^{\Ordo(n)}$. In Section~\ref{sec:xe} we investigate the relationship between $\x{\E}$ and $\E$ in greater detail and prove that the former can be defined via piecewise slices of the latter. Last, in Section~\ref{sec:reductions} we consider reductions compatible with bi-parameterized problems, with a particular focus on obtaining relevant reductions for problems in $\x{\E}$.

\subsection{$\FPE$: $f(k) \cdot 2^{\Ordo(n)}$ time}
\label{sec:parae}

In this section we introduce the more desirable class of problems solvable in $f(k) \cdot 2^{\Ordo(n)} \cdot |x|^{\Ordo(1)}$ time, and will see that the resulting class can be obtained via natural constructions in parameterized complexity.

\begin{definition}
  We define the following concepts/classes.
  \begin{enumerate}
    \item
A bi-parameterized problem $P \subseteq \Sigma^* \times \mathbb{N} \times \mathbb{N}$ is in $\E$ after a {\em pre-computation of the parameter} $k$ if there exists an alphabet $\Delta$ and a computable function $f \colon \mathbb{N} \rightarrow \Delta^*$ and a problem $Q \subseteq \Sigma^* \times \mathbb{N} \times \Delta^*, Q \in \E$ where

\[(x,n,k) \in P \Leftrightarrow (x, n, f(k)) \in Q.\]
\item
  $\para{\E}$ is the class of bi-parameterized problems which are in $\E$ after a precomputation of the parameter.
\end{enumerate}    
\end{definition}

As we will now prove, $\para{\E}$ is the natural exponential counter part to $\FPT$, and due to this correspondence we typically write $\FPE$ rather than $\para{\E}$.

\begin{lemma}  
  A bi-parameterized problem $P \subseteq \Sigma^* \times \mathbb{N} \times \mathbb{N}$ with parameters $n$ and $k$ is in $\para{\E}$ ($\FPE$) after a pre-computation of the parameter $k$ if and only if it is solvable in $f(k) \cdot 2^{\Ordo(n)} \cdot |x|^{\Ordo(1)}$ time for computable $f \colon \mathbb{N} \rightarrow \mathbb{N}$.
\end{lemma}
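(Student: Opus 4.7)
The plan is to prove both implications, in the spirit of the standard characterization of $\FPT$ as the class of parameterized problems decidable in polynomial time after a pre-computation of the parameter.

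For the easy ($\Rightarrow$) direction, I assume $P \in \para{\E}$ witnessed by a computable $f \colon \mathbb{N} \to \Delta^*$ and a problem $Q \in \E$ with $(x,n,k) \in P$ iff $(x,n,f(k)) \in Q$. To decide $(x,n,k)$, first compute $f(k)$; since $f$ is computable, both the computation time and the output length are bounded by some computable $g(k)$. Then run the $\E$-algorithm for $Q$ on $(x,n,f(k))$ in $2^{\Ordo(n)}\cdot(|x|+|f(k)|)^{\Ordo(1)} \leq h(k)\cdot 2^{\Ordo(n)}\cdot|x|^{\Ordo(1)}$ time, yielding the desired overall $f'(k)\cdot 2^{\Ordo(n)}\cdot|x|^{\Ordo(1)}$ bound.

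For the harder ($\Leftarrow$) direction, assume $P$ is decided by an algorithm $\mathcal{A}$ in $g(k)\cdot 2^{\Ordo(n)}\cdot|x|^{\Ordo(1)}$ time for some computable $g$. The pre-computation must both record $k$ in a recoverable way and pad the instance to length at least $g(k)$, so that the $g(k)$ factor in $\mathcal{A}$'s running time is absorbed into the polynomial dependency on instance size. Concretely, take $\Delta=\{0,1,\#\}$, set $f^*(k) = \mathrm{bin}(k)\,\#\,1^{g(k)}$, and define $Q$ to be the set of $(x,n,y)$ such that $y$ parses as $\mathrm{bin}(k')\,\#\,1^m$ with $m \geq g(k')$ and $(x,n,k') \in P$. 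By construction $(x,n,k) \in P$ iff $(x,n,f^*(k)) \in Q$.

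To see $Q \in \E$, on input $(x,n,y)$ the algorithm parses $y$, rejects if $y$ is not of the required form, and otherwise simulates $\mathcal{A}$ on $(x,n,k')$ with an explicit timeout at $c\cdot m\cdot 2^{cn}\cdot|x|^c$ steps for a suitably large constant $c$ (rejecting on timeout). Since $m \leq |y|$, this runs in $2^{\Ordo(n)}\cdot(|x|+|y|)^{\Ordo(1)}$ time, placing $Q$ in $\E$. On well-formed inputs $y = f^*(k)$ the budget $m = g(k)$ suffices for $\mathcal{A}$ to finish, so the verdict is correct; on ill-formed inputs, or when $m < g(k')$, rejection is consistent with the $m \geq g(k')$ guard in the definition of $Q$. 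The only genuinely delicate point is keeping $Q$ in $\E$ on arbitrary inputs rather than only on the images of $f^*$, which is what motivates the explicit timeout and the $m \geq g(k')$ guard; the rest is routine bookkeeping.
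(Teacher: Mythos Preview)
Your argument follows essentially the same strategy as the paper's proof (which in turn mirrors the standard Flum--Grohe characterization of $\para{C}$): encode $k$ together with a unary pad of length $g(k)$ in the pre-computed string, and decide $Q$ by a clocked simulation of the original algorithm.

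There is one small wrinkle in your $(\Leftarrow)$ direction. You define $Q$ with the semantic guard $m \geq g(k')$, but your described algorithm does not (and cannot cheaply) check this guard: since $k'$ is written in binary, $k'$ may be exponential in $|y|$, and computing $g(k')$ need not be bounded by any polynomial in $|y|$. Without the check, the clocked simulation may still accept some $(x,n,y)$ with $m < g(k')$ whenever $\mathcal{A}$ happens to terminate within the budget, so the algorithm does not decide your $Q$ as stated. The clean fix, which is exactly what the paper does, is to \emph{define} $Q$ as the language accepted by the clocked simulation (and to encode $k$ in unary so it is trivially recoverable). Then $Q \in \E$ is immediate from the running-time analysis, and the required biconditional $(x,n,k)\in P \Leftrightarrow (x,n,f^*(k))\in Q$ only has to be verified on the image of $f^*$, where the budget $m=g(k)$ is guaranteed to suffice. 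With that adjustment your proof is correct and matches the paper's.
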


\begin{proof}
The proof is similar to the $\para{C}$ construction in Flum \& Grohe when $C$ is a classical complexity class~\cite{DBLP:series/txtcs/FlumG06}, but we provide a complete proof to ensure that we get the necessary time bound.
  Assume that $P$ is solvable in $f(k) \cdot 2^{cn + \Ordo(1)} \cdot |x|^{\Ordo(1)}$ time for computable $f \colon \mathbb{N} \rightarrow \mathbb{N}$ and $c \geq 1$, and let $A$ denote this algorithm. Thus, the algorithm $A$ takes an instance $x \in \Sigma^*$ together with two parameters $n,k \in \mathbb{N}$ as input. We will prove that $P$ is in $\para{\E}$ after a pre-computation of the parameter $k$. Let $\Delta = \{1, :\}$ where $:$ is a symbol not occurring in $\Sigma$. Define the function $\pi(k) = k : f(k)$ for $k \in \mathbb{N}$, where $k$ and $f(k)$ are written in unary. This function is clearly computable since $f$ is computable. Consider the language $Q \subseteq \Sigma^* \times \mathbb{N} \times \Delta^*$ accepted by the following algorithm $B$:
  \begin{enumerate}
  \item    
    Let $(x, n, y) \in \Sigma^* \times \mathbb{N} \times \Delta^*$ be an instance.
  \item
    Check if the string $y$ is of the form $k : u$ for some ${u \in \{1\}^*}$.
  \item
    If not, answer no, otherwise, continue.
\item  Simulate the algorithm $A$ with $(x, n, k)$ as input for at most $|u| \cdot 2^{cn + \Ordo(1)} \cdot |x|^{\Ordo(1)}$ steps. If $A$ answers yes within this time, answer yes, otherwise answer no. 
\end{enumerate}
We begin by analysing the running time of the algorithm $B$. Since $u$ is part of the input, the simulation of the algorithm $A$ in step (4) can be accomplished with a polynomial overhead, and step (2) takes only linear time with respect to $y$. For correctness, assume that $(x,n,k)$ is a yes-instance of $P$. We claim that $(x,n,\pi(k))$ is a yes-instance of $Q$. In this case the algorithm will reach step (4) and begin the simulation of $A$ for $|u| \cdot 2^{cn + \Ordo(1)} \cdot |x|^{\Ordo(1)}$ steps. 
Assume that $(x,n,\pi(k)) \in Q$. It follows that $A$ answers yes after at most $f(k) \cdot 2^{cn+\Ordo(1)} \cdot |x|^{\Ordo(1)}$ steps, and $(x,n,k)$ must therefore be a yes-instance of $P$.
  
For the other direction, assume that $P$ is in $\para{\E}$ after a pre-computation of the parameter $k$, and let ${Q \in \E \subseteq \Sigma^* \times \mathbb{N} \times \Delta^*}$ and the function $\pi \colon \mathbb{N} \rightarrow \Delta^*$ witness this. Furthermore, let $A$ be an algorithm running in time $2^{\Ordo(n)} \cdot (|x| + |y|)^{\Ordo(1)}$ for input $(x, n, y) \in \Sigma^* \times \mathbb{N} \times \Delta^*$. Now, consider the following algorithm $B$ for $P$:

\begin{enumerate}
\item
  Let $(x, n, k) \in \Sigma^* \times \mathbb{N} \times \mathbb{N}$ be an instance of $P$.
\item
  Call the algorithm $A$ with input $(x, n, \pi(k))$.
\item
  Answer yes if $A$ answers yes, and otherwise no.
\end{enumerate}
This takes $\pi(k) + 2^{\Ordo(n)} \cdot (|x| + |\pi(k)|)^{\Ordo(1)} = \pi(k) + 2^{\Ordo(n)} \cdot (|x|)^{\Ordo(1)}\cdot(|\pi(k)|)^{\Ordo(1)}$ time which can be expressed as $f(k) \cdot 2^{\Ordo(n)} \cdot (|x|)^{\Ordo(1)}$ for some computable $f \colon \mathbb{N} \rightarrow \mathbb{N}$ (which depends on $\pi$).
\end{proof}

We remark that running times of the form ${f(k) \cdot 2^{\Ordo(n)} \cdot |x|^{\Ordo(1)}}$ easily allows us to hide factors of the form $n^{g(k)}$ for computable $g \colon \mathbb{N} \rightarrow \mathbb{N}$. Consider a running time of the form ${n^{f(k)} \cdot 2^{cn + \Ordo(1)} \cdot |x|^{\Ordo(1)}}$ for a computable function $f \colon \mathbb{N} \rightarrow \mathbb{N}$. We claim that it is always bounded by $g(k) \cdot 2^{cn + \Ordo(1)} \cdot 2^n \cdot |x|^{\Ordo(1)}$ where $g(k) = f(k)^{{f(k)}^{f(k)}}$. If $2^n \leq n^{f(k)}$ then $f(k) \geq \frac{n}{\log n}$ and $f(k)^{f(k)} \geq n$, and it follows that $g(k) \geq n^{f(k)}$.

\begin{theorem} \label{thm:parabound}
      A bi-parameterized problem is in $\para{\E}$ if and only if it is solvable in $n^{f(k)} \cdot 2^{\Ordo(n)} \cdot |x|^{\Ordo(1)}$ time for computable $f \colon \mathbb{N} \rightarrow \mathbb{N}$.
\end{theorem}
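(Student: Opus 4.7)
The plan is to prove the two inclusions separately. The forward direction, that $\para{\E}$ is contained in the class of problems admitting an $n^{f(k)} \cdot 2^{\Ordo(n)} \cdot |x|^{\Ordo(1)}$-time algorithm, is essentially trivial: any $\para{\E}$-algorithm runs in $f(k) \cdot 2^{\Ordo(n)} \cdot |x|^{\Ordo(1)}$ time, and for $n \geq 2$ we have $n^{f(k)} \geq 2^{f(k)} \geq f(k)$, so the same algorithm fits the stated form with polynomial exponent $f(k)$. The finitely many boundary cases with $n \in \{0,1\}$ can be disposed of separately by a constant-size lookup depending only on $k$, absorbed into the $|x|^{\Ordo(1)}$ factor.

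For the backward direction I would follow the observation sketched in the paragraph preceding the statement. Suppose an algorithm $A$ for $P$ runs in time at most $n^{f(k)} \cdot 2^{cn + \Ordo(1)} \cdot |x|^{\Ordo(1)}$ for a computable $f$ and constant $c \geq 0$. Given an instance $(x,n,k)$ perform a case analysis on the dominant factor. If $n^{f(k)} \leq 2^n$ then the running time is bounded by $2^{(c+1)n + \Ordo(1)} \cdot |x|^{\Ordo(1)} = 2^{\Ordo(n)} \cdot |x|^{\Ordo(1)}$, which trivially lies in $\para{\E}$ with a constant prefactor. Otherwise $n^{f(k)} > 2^n$, and taking logarithms yields $f(k) \log_2 n > n$, i.e.\ $n / \log_2 n < f(k)$. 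Since $x \mapsto x/\log_2 x$ is eventually strictly increasing and unbounded, there is a computable $h \colon \mathbb{N} \rightarrow \mathbb{N}$ with $n \leq h(f(k))$ whenever this inequality holds; a concrete choice is $h(r) = 2 r \log_2 r$ for $r$ sufficiently large, padded with a table of explicit values for small~$r$. Substituting this bound back, both $n^{f(k)}$ and $2^{cn+\Ordo(1)}$ become functions of $k$ alone, so the running time collapses to $g(k) \cdot |x|^{\Ordo(1)}$ for some computable $g$, again in $\para{\E}$.

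Combining the two cases, the same algorithm $A$ runs in $g'(k) \cdot 2^{\Ordo(n)} \cdot |x|^{\Ordo(1)}$ time for a computable $g'$ (taking, e.g., $g'(k) = \max(1,\, h(f(k))^{f(k)} \cdot 2^{c \cdot h(f(k)) + \Ordo(1)})$); no modification of $A$ is needed, only a sharper analysis. The main obstacle is the bound on $n$ in the second case: one must justify, in a sufficiently uniform and computable way, the extraction of $h$ from the inequality $n/\log_2 n < f(k)$. This is routine real-analysis bookkeeping, but requires care to ensure the resulting function is both computable and independent of the particular instance, so that the final bound $g'(k) \cdot 2^{\Ordo(n)} \cdot |x|^{\Ordo(1)}$ matches the defining form of $\para{\E}$; the rest of the argument is elementary.
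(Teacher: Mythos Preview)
Your proposal is correct and follows essentially the same argument as the paper: both directions hinge on the dichotomy $n^{f(k)} \le 2^n$ versus $n^{f(k)} > 2^n$, absorbing the $n^{f(k)}$ factor into an extra $2^n$ in the first case and bounding $n$ by a computable function of $f(k)$ in the second. The only cosmetic difference is that the paper packages the second case into the single explicit bound $g(k) = f(k)^{f(k)^{f(k)}}$, whereas you go through an intermediate $h$ with $n \le h(f(k))$; the content is the same.
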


Note that if we had instead attempted to define $\FPE$ as the single-parameterized problems in $\E$ with respect to a parameter $n + k$ in $\E$ then the factor $n^{f(k)}$, necessary for the CSP$(\mathcal{A})$ algorithm in the main paper, would be prohibited. A similar attempt of defining $\FPE$ as the single-parameterized problems solvable in $f(k) \cdot |x|^k \cdot 2^{\Ordo(|x|)}$ time would also be unsuitable since $|x|$ is not a meaningful parameter for single-exponential time algorithms.

\subsection{$\x{\E}$: $f(k)^{n}$ time}
\label{sec:xe}
In this section we investigate bi-parameterized problems solvable in $f(k)^{n} \cdot |x|^{\Ordo(1)}$ time for computable $f \colon \mathbb{N} \rightarrow \mathbb{N}$.

\begin{definition}
  Let $P \subseteq \Sigma^* \times \mathbb{N}$ be a bi-parameterized problem.
    For $k \geq 1$ the {\em $k$th slice} of $P$ is the parameterized problem
    $P_k = \{(x,n) \mid (x, n, k) \in P\} \subseteq \Sigma^{*} \times \mathbb{N}$.
\end{definition}

\begin{definition} \label{def:xe} We define the following complexity classes.
  \begin{enumerate}
    \item
      $\x{\E}_{\mathrm{NU}}$ is the set of all bi-parameterized problems $P$ such that $P_k \in \E$ for each $k \geq 1$.
    \item
      $\x{\E}_{\mathrm{U}}$ is the set all bi-parameterized problems where (1) each slice $P_k \in \E$ can be solved by an algorithm $A_k$, and (2) there exists a computable function ${g \colon \mathbb{N} \rightarrow \{(A_1, c_1), (A_2, c_2) \ldots\}}$ which associates each $k$ with the algorithm $A_k$ for $P_k$ running in $2^{c_kn + \Ordo(1)}$ time. 
    \end{enumerate}
  \end{definition}

  Since we in this paper do not care about undecidable problems we henceforth only consider the uniform class $\x{\E}_{\mathrm{U}}$ and write $\x{\E}$ instead of $\x{\E}_{\mathrm{U}}$.
  
    \begin{lemma}  
      A bi-parameterized problem $P$ is in $\x{\E}$ if and only if it is solvable in $f(k)^n \cdot |x|^{\Ordo(1)}$ time for some computable function $f \colon \mathbb{N} \rightarrow \mathbb{N}$.
    \end{lemma}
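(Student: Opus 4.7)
The plan is to prove both directions by straightforward simulation, mirroring the $\FPE$ characterization lemma proved earlier in the section.

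For the $(\Rightarrow)$ direction, I will assume $P \in \x{\E}$ and invoke Definition~\ref{def:xe} to obtain a computable function $g$ that, on input $k$, returns a pair $(A_k, c_k)$ where $A_k$ decides the slice $P_k$ in time $2^{c_k n + \Ordo(1)} \cdot |x|^{\Ordo(1)}$. I will then describe one uniform algorithm $B$ that, on input $(x, n, k)$, first computes $(A_k, c_k) := g(k)$ in some computable amount of time $t(k)$ and then universally simulates $A_k$ on $(x,n)$ with at most polynomial overhead. Choosing $f(k) := 2^{c_k + 1} + t(k) + 2$, for $n \geq 1$ the total running time is bounded by
\[ t(k) + 2^{c_k n + \Ordo(1)} \cdot |x|^{\Ordo(1)} \leq f(k)^n \cdot |x|^{\Ordo(1)}, \]
and the $n = 0$ case collapses to a finite lookup that can be absorbed into $f$.

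For the $(\Leftarrow)$ direction, I will take an algorithm $A$ deciding $P$ in $f(k)^n \cdot |x|^{\Ordo(1)}$ time with computable $f$, and for each $k \geq 1$ define $A_k$ to be the algorithm that, on input $(x,n)$, simply invokes $A(x, n, k)$. Its running time equals $f(k)^n \cdot |x|^{\Ordo(1)} = 2^{(\log_2 f(k)) n} \cdot |x|^{\Ordo(1)}$, which is $2^{\Ordo(n)} \cdot |x|^{\Ordo(1)}$ for each fixed $k$, so $P_k \in \E$. Setting $c_k := \lceil \log_2 f(k) \rceil + 1$ and $g(k) := (A_k, c_k)$, the algorithm $A_k$ is obtained from $A$ by hard-coding $k$ and $c_k$ is directly computable from $f(k)$, so $g$ is the computable witness required by Definition~\ref{def:xe} and $P \in \x{\E}$.

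The main (modest) obstacle will be bookkeeping the polynomial overhead $|x|^{\Ordo(1)}$: its degree could a priori depend on $k$ when passing between the two formulations. I will handle this as in the $\FPE$ characterization proof above, by folding any constant multiplicative or sub-exponential $k$-dependence into the exponential base $f(k)$, which is legitimate as soon as $n \geq 1$; the remaining finitely many $n = 0$ instances per slice are tabulated inside $f$. This step is routine but is the only place where care is required.
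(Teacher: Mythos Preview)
Your proposal is correct and follows essentially the same simulation approach as the paper's proof: compute $g(k)$, run $A_k$, and absorb the $k$-dependent constants into the base $f(k)$ for the forward direction; hard-code $k$ and read off $c_k$ from $\log f(k)$ for the converse. If anything, you are more careful than the paper: you give an explicit choice of $f$, you actually verify the uniformity condition (2) of Definition~\ref{def:xe} in the $(\Leftarrow)$ direction (the paper only checks that each slice lies in $\E$), and you flag the potential $k$-dependence of the polynomial degree, which the paper leaves implicit.
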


    \begin{proof}
      For the first direction, let $(x,n,k)$ be an instance of the problem. We begin by determining the algorithm $A_k$ and the function $f_k$ which takes $g(k)$ time for a fixed, computable function $g$. Next, we run the algorithm $A_k$ with the input $(x,n)$ for $2^{c_k n + \Ordo(1)}$ steps, and answer yes if and only if it accepts. Hence, we obtain a running time $g(k) + 2^{c_k n + \Ordo(1)} \cdot |x|^{\Ordo(1)}$  time, which can be expressed as $f(k)^{n} \cdot |x|^{\Ordo(1)}$ for a computable $f$ which depends on $g$ and $c_k$.
      For the other direction we simply observe that a running time of the form $f(k)^n \cdot |x|^{\Ordo(1)}$ clearly implies that each slice $P_k \in \E$ since $k$ is fixed.
    \end{proof}

\subsection{Reductions for Bi-Parameterized Problems}
\label{sec:reductions}    
    We proceed by describing reductions compatible with problems in $\FPE$. The complicating factor is that we have two parameters to take into consideration: the parameter $k$ which may increase substantially as long as it is bounded by a computable function in $k$, and the parameter $n$ which is only allowed to increase linearly --- in order to preserve running times of the form $2^{\Ordo(n)}$.

\begin{definition}
  Let $P \subseteq \Sigma^* \times \mathbb{N} \times \mathbb{N}$ and $Q \subseteq \Delta^* \times \mathbb{N} \times \mathbb{N}$ be two bi-parameterized problem. An {\em $\FPE$-reduction} with respect to $n$ is  is a function $R \colon \Sigma^* \times \mathbb{N} \times \mathbb{N} \rightarrow \Delta^* \times \mathbb{N} \times \mathbb{N}$ such that

  \begin{enumerate}
  \item
    $(x, n, k) \in P$ if and only if $R((x, n, k)) \in Q$.
  \item
    There exists $c \in \mathbb{N}$ such that $n' = c \cdot n + \Ordo(1)$ for all instances $(x, n, k)$ and $R((x,n,k)) = (x',n',k')$ of $P$ and $Q$. 
  \item
    There exists a computable function $f \colon \mathbb{N} \rightarrow \mathbb{N}$ such that $R((x, n, k))$ is computable in $f(k) \cdot 2^{\Ordo(n)} \cdot |x|^{\Ordo(1)}$ time for any instance $(x,n,k)$ of $P$.
    \item
      There exists a computable function $g \colon \mathbb{N} \rightarrow \mathbb{N}$ such that $k \leq g(k')$ for all instances $(x, n, k)$ and ${R((x, n, k)) = (x', n', k')}$ of $P$ and $Q$.
  \end{enumerate}    
\end{definition}

 We write $P \parared Q$ if $P$ is $\FPE$-reducible to $Q$. If $P$ and $Q$ are two single-parameterized problems then we say that $P$ is {\em $\E$-reducible} to $Q$, written $P \ered Q$, if $P \times \{0\} \parared Q \times \{0\}$. Thus, an $\E$-reduction is a single-exponential time reduction which may increase the number of variables by a constant factor.

\begin{theorem} \label{thm:reductions} 
    The following statements hold.
    \begin{enumerate}
        \item 
        $\E$ is closed under $\E$-reductions.
        \item
        $\FPE$ is closed under $\FPE$-reductions.
    \end{enumerate}
\end{theorem}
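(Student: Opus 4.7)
The plan is to prove both parts by straightforward composition: given a reduction from $P$ to $Q$ together with an algorithm for $Q$ in the target class, I compose them and verify that the resulting running time still fits into $\E$, respectively $\FPE$. No new algorithmic idea is needed; the work is purely bookkeeping about how $n$, $k$, and $|x|$ transform under composition.

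For part (1), assume $P \ered Q$ and $Q \in \E$. On input $(x,n)$, I first run the reduction $R$ in $2^{\Ordo(n)} \cdot |x|^{\Ordo(1)}$ time to produce $(x',n')$. Since the output size cannot exceed the running time, $|x'| \leq 2^{\Ordo(n)} \cdot |x|^{\Ordo(1)}$, and by clause (2) of the reduction definition, $n' = cn + \Ordo(1)$. Running the $\E$-algorithm for $Q$ on $(x',n')$ then costs $2^{\Ordo(n')} \cdot |x'|^{\Ordo(1)}$; after substituting these bounds, it simplifies to $2^{\Ordo(n)} \cdot |x|^{\Ordo(1)}$. Summing with the reduction cost preserves the form, placing $P$ in $\E$.

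For part (2) the structure is identical but $k$ must be tracked. Assume $P \parared Q$ and $Q \in \FPE$, so $Q$ admits an algorithm running in $h(k') \cdot 2^{\Ordo(n')} \cdot |x'|^{\Ordo(1)}$ time for some computable $h$, while $R$ runs in $f(k) \cdot 2^{\Ordo(n)} \cdot |x|^{\Ordo(1)}$ time. Clause (2) controls $n'$ linearly in $n$, clause (3) bounds $|x'|$ polynomially in the reduction's running time, and clause (4) lets us bound $h(k')$ by a computable function of $k$. Gathering all $k$-dependent factors into a single computable function $f'$ gives a total running time of the form $f'(k) \cdot 2^{\Ordo(n)} \cdot |x|^{\Ordo(1)}$, whence $P \in \FPE$.

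The main obstacle, though minor, is verifying that once $|x'|^{\Ordo(1)}$ is expanded by substituting $|x'| \leq f(k) \cdot 2^{\Ordo(n)} \cdot |x|^{\Ordo(1)}$, the resulting expression still fits the template $f'(k) \cdot 2^{\Ordo(n)} \cdot |x|^{\Ordo(1)}$ and does not produce, say, a $2^{\Ordo(n^2)}$ factor. This works because raising an expression of that shape to a constant power merely multiplies the hidden constants in the $\Ordo(n)$ exponent and the polynomial, and because the constant $c$ from $n' = cn + \Ordo(1)$ is absorbed into the $\Ordo(n)$ bound. Once this calculation is carried out explicitly in the $\FPE$ case, part (1) follows by specialising $k = k' = 0$ and reading off the resulting running time.
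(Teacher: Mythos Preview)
Your proposal is correct and follows essentially the same approach as the paper: compose the reduction with the target-class algorithm and bookkeep how $n$, $k$, and $|x'|$ behave, checking in particular that $|x'|^{\Ordo(1)}$ only contributes another $f(k)\cdot 2^{\Ordo(n)}\cdot|x|^{\Ordo(1)}$ factor. The paper does precisely this calculation with explicit constants for the $\FPE$ case and then remarks (as you also do at the end) that the $\E$ case is the specialisation obtained by ignoring $k$.
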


\begin{proof} 
Since the first statement can be seen as a special case of the second statement we only prove the former. Hence, assume that $P \in \FPE$ and that $Q$ is a bi-parameterized problem such that $Q \parared P$. We will prove that $Q \in \FPE$. Since ${P \in \FPE}$ there exists computable $f_P \colon \mathbb{N} \rightarrow \mathbb{N}$ and constants $c_P$ and $D_P$ such that $P$ is solvable in $f_P(k_P) \cdot 2^{c_Pn_P + \Ordo(1)} \cdot |x|^{d_P}$ time for any instance $(x, n_P, k_P)$ of $P$. Let $R$ be the $\FPE$-reduction from $Q$ to $P$, with respect to (1) a constant $c \geq 1$ which linearly bounds the parameter $n$ and (2) a computable function $f_R \colon \mathbb{N} \rightarrow \mathbb{N}$ and  constants $c_R$ and $d_R$ such that $R((x,n_Q,k_Q))$ can be computed in $f_R(k_Q) \cdot 2^{c_R n_Q + \Ordo(1)} \cdot |x|^{d_R}$ time for any instance $(x,n_Q,k_Q)$, and (3) a computable function $g_R \colon \mathbb{N} \rightarrow \mathbb{N}$ which bounds the parameter $k$, i.e., $k_Q \leq g_R(k_Q)$ for any instance $(x,n_Q,k_Q)$. Let $(x, n_{Q}, k_{Q})$ be an instance of $Q$. Computing $R((x, n_Q, k_Q))$ then takes $f_R(k_Q) \cdot 2^{c_R \cdot n_Q + \Ordo(1)} \cdot |x|^{d_R}$ time, and results in an instance $(x', n_P, k_P)$ of $P$ where $n_P = c \cdot n_Q + \Ordo(1)$, $k_P = g_R(k_Q)$, and where ${|x'| \leq f_R(k_Q) \cdot 2^{c_R \cdot n_Q + \Ordo(1)} \cdot |x|^{d}}$. Hence, the instance is solvable in $f_P(g_R(k_Q)) \cdot 2^{c \cdot c_P \cdot n_Q + \Ordo(1)} \cdot (f_R(k_Q) \cdot 2^{c_R \cdot n_Q + \Ordo(1)} \cdot |x|^{d_R})^{d_P + 1}$ time, which yields an $\FPE$ running time bounded by
\[f_Q(k) \cdot 2^{c_Q \cdot n + \Ordo(1)} \cdot |x|^{d_R(d_P + 1)}\]
for $f_Q(k) = f_P(g_R(k_Q)) \cdot f(k_Q)^{d_P + 1}$ and a constant ${c_Q \leq c \cdot c_P + c_R \cdot (d_P + 1)}$, and $Q$ is thus in $\FPE$.
\end{proof}

We refrain from formally introducing $\x{\E}$-reductions since the following theorem, which follows immediately from Definition~\ref{def:xe} and Theorem~\ref{thm:reductions}, already provides a straightforward method to rule out inclusion in $\x{\E}$.

\begin{theorem} \label{thm:non_inclusions}
    Let $Q$ be a bi-parameterized problem. 
    \begin{enumerate}
        \item 
        If there exists a slice $Q_{k_0} \notin \E$ then $Q \notin \x{\E}$.
        \item
        If there exists a single-parameterized problem $P \notin \E$ and a slice $Q_{k_0}$ such that $P \ered Q_{k_0}$ then $Q \notin \x{\E}$.
    \end{enumerate}
\end{theorem}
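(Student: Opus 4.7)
The plan is to prove both statements by contraposition, leveraging the slice-based characterization of $\x{\E}$ from Definition~\ref{def:xe} and the closure of $\E$ under $\E$-reductions from Theorem~\ref{thm:reductions}. Neither part requires genuinely new machinery; they follow by unpacking definitions and chaining the previously established facts. The main thing to be careful about is correctly interpreting $\ered$ as a reduction between single-parameterized problems via the embedding $P \mapsto P \times \{0\}$, so that Theorem~\ref{thm:reductions}(1) can legitimately be invoked.

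For part (1), I would argue contrapositively: suppose $Q \in \x{\E}$. By Definition~\ref{def:xe}, together with the convention that $\x{\E} = \x{\E}_{\mathrm{U}}$, this means that for every $k \geq 1$ the slice $Q_k$ is solvable by some algorithm $A_k$ running in $2^{c_k n + \Ordo(1)} \cdot |x|^{\Ordo(1)}$ time, uniformly selected by a computable function. In particular, $Q_k \in \E$ for every $k$. Hence, if some slice $Q_{k_0}$ fails to lie in $\E$, then $Q \notin \x{\E}$, which is exactly the claim.

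For part (2), let $P \notin \E$ be a single-parameterized problem and suppose $P \ered Q_{k_0}$ for some fixed $k_0$. If $Q_{k_0}$ were in $\E$, then by Theorem~\ref{thm:reductions}(1) closure of $\E$ under $\E$-reductions would yield $P \in \E$, contradicting our hypothesis on $P$. Therefore $Q_{k_0} \notin \E$, and applying part (1) to the slice $Q_{k_0}$ yields $Q \notin \x{\E}$.

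In summary, the proof is a two-line chain: the definition of $\x{\E}$ forces every slice to lie in $\E$, and $\E$-hardness of even a single slice (witnessed directly or via a reduction from a problem known to be outside $\E$) breaks membership in $\x{\E}$. The only subtle point is that the slice $Q_{k_0}$ in part (2) must be treated as a single-parameterized problem (parameter $n$ only), so that the $\E$-reduction $P \ered Q_{k_0}$ of Section~\ref{sec:reductions} applies in its intended form.
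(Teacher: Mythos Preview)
Your proof is correct and matches the paper's approach exactly: the paper does not give an explicit proof but states that the theorem ``follows immediately from Definition~\ref{def:xe} and Theorem~\ref{thm:reductions},'' which is precisely the chain you spell out (slices of $\x{\E}$ problems lie in $\E$, and $\E$ is closed under $\E$-reductions). Your additional remark about treating $Q_{k_0}$ as a single-parameterized problem so that the $\ered$ notion applies is a helpful clarification of something the paper leaves implicit.
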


\section{Parameterized Problems in $\FPE$ and $\x{\E}$} \label{sec:algorithms}
We now analyse $\FPE$ and $\x{\E}$ in greater detail and begin in Section~\ref{sec:fin} by considering finite-domain CSPs with parameters such as {\em domain size} and {\em degree}. 
In Section~\ref{sec:partial} we consider the {\em effective width} of a partial order as a coarse-grained parameter for \textsc{partially ordered time}, and construct a non-trivial $\x{\E}$-algorithm. Last, in Section~\ref{sec:allen}, we use the maximum number of possible interval {\em overlaps} as a parameter for \textsc{Allen's interval algebra}, resulting in the first non-trivial example of a problem in $\FPE$.

\subsection{Finite-Domain CSPs} 
\label{sec:fin}
For a finite-domain CSP instance $(V,C)$ we consider the following parameters ($\ar(R)$ is the arity of a relation $R$).

\begin{enumerate}
  \item
    $\text{dom}((V,C)) = \bigcup_{(d_1, \ldots, d_{\ar(R)}) \in R, R(\mathbf{x}) \in C} \{d_1, \ldots, d_{\ar(R)}\}$,
  \item
    $\text{max-arity}((V,C)) = \text{max} \{\ar(R) \mid R(\mathbf{x}) \in C\}$,
  \item
    $\text{max-degree}((V,C)) = \text{max} \{\text{degree}(x,C) \mid x \in V\}$ where $\text{degree}(x, C) = |\{R(\mathbf{x}) \in C \mid x \text{ occurs in } \mathbf{x}\}|$,
  \item
    $\text{max-cardinality}((V,C)) = \text{max} \{|R| \mid R(\mathbf{x}) \in C\}$,    
\end{enumerate}

For each such parameter $f$ we write $p$-$f$-CSP for the bi-parametrized problem obtained by letting the coarse-grained parameter $k$ equal $f((V,C))$ and the fine-grained parameter $n$ equal the number of variables $|V|$ (for a CSP instance $(V,C)$). For example, $p$-dom-CSP denotes the CSP problem where the coarse-grained parameter $k$ is the domain
and the fine-grained parameter $n$ is the number of variables. Before proving the main result we introduce the following problem from Traxler~\cite{traxler2008}. 

\multiproblemDef{$p$-SparseBinCSP}
{A natural number $d$, a set of variables $V$, and a set of constraints $C$ where $\text{max-arity}((V,C)) = 2$ and $\text{max-degree}((V,C)) \leq 3 \cdot d^2$ where $d = \text{dom}((V,C))$.}
{$n = |V|$.}
{Does there exist a function $V \rightarrow D$ which satisfies all constraints, i.e., $(f(x_1), \ldots, f(x_m)) \in R$ for every $R(x_1, \ldots, x_m) \in C$?}

We let $p$-dom-SparseBinCSP be the bi-parameterized variant of $p$-SparseBinCSP where the parameter $k$ bounds the domain. This leads to our first problem which is provably not in $\FPE$ (unless the ETH fails).

\begin{theorem} \label{thm:traxler}  
Assume that the ETH is true. Then:
\begin{enumerate}
    \item 
    $p$-SparseBinCSP is not in $\E$, and
    \item
    $p$-dom-SparseBinCSP is not in $\FPE$.
\end{enumerate}
\end{theorem}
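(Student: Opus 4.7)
The plan is to derive both non-inclusions as direct consequences of Traxler's lower bound~\cite{traxler2008}, which (under ETH) rules out a $2^{o(d \cdot n)}$-time algorithm for sparse binary CSP of domain $d$ over $n$ variables. The key observation is that any hypothetical $2^{\Ordo(n)}$ upper bound---whether with $d$ part of the input as in part~(1), or with $d$ as the coarse-grained parameter absorbed into a computable factor $f(d)$ as in part~(2)---has an exponent constant that is independent of $d$, and therefore violates Traxler's bound as soon as $d$ is chosen sufficiently large.

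For part~(1), suppose for contradiction that $p$-SparseBinCSP $\in \E$. Then there is a constant $c$ and an algorithm $A$ solving every input $(d, V, C)$ in $2^{cn} \cdot |x|^{\Ordo(1)}$ time, with $c$ independent of $d$. Let $\varepsilon > 0$ be the constant such that Traxler excludes a $2^{\varepsilon d n}$-time algorithm under ETH, and fix $d_0 \in \mathbb{N}$ with $d_0 > c/\varepsilon$. Restricting $A$ to inputs with domain size exactly $d_0$ yields an algorithm running in $2^{cn} \cdot |x|^{\Ordo(1)} \leq 2^{\varepsilon d_0 n} \cdot |x|^{\Ordo(1)}$ time on this infinite family of instances, contradicting Traxler's lower bound.

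For part~(2), suppose $p$-dom-SparseBinCSP $\in \FPE$. Then there exist a computable $f \colon \mathbb{N} \rightarrow \mathbb{N}$ and a constant $c$ such that the problem is solved in $f(k) \cdot 2^{cn} \cdot |x|^{\Ordo(1)}$ time, where $k$ upper bounds the domain. Fix $d_0 > c/\varepsilon$ for the same $\varepsilon$ as above; on the slice $k = d_0$ the factor $f(d_0)$ is an absolute constant, producing a $2^{cn} \cdot |x|^{\Ordo(1)}$-time algorithm for the family of instances of domain at most $d_0$. As in part~(1), this contradicts Traxler's $2^{\Omega(d_0 n)}$ lower bound on this slice.

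The content of the argument is essentially a pigeonhole observation about the exponent: the single-exponential part of an $\FPE$ algorithm has a constant $c$ that cannot depend on $k$, so any sufficiently large choice of $k$ exposes the algorithm as a $2^{o(kn)}$ algorithm on the corresponding slice. The one subtlety is to invoke Traxler's result in its $2^{o(dn)}$ form (rather than as a separate statement for each fixed $d$), since we need the dependence on $d$ in the exponent to be exploited when $d$ grows; once this formulation is in hand, parts~(1) and~(2) both follow from the same choice of $d_0$.
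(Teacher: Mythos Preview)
Your argument is structurally identical to the paper's: both exploit that the single-exponential constant $c$ in an $\E$ or $\FPE$ bound is independent of the domain size, so one may choose $d$ large enough that the hypothetical $2^{cn}$ algorithm on the $d$-slice falls below Traxler's lower bound. The paper phrases this as a preliminary claim (``for every $c$ there is a $d$ such that the $d$-slice is not solvable in $2^{cn}$ time'') and then applies it twice, exactly as you do.

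One inaccuracy to fix: the form of Traxler's bound you cite is too strong. Traxler shows (under ETH) that any $2^{cn}$ algorithm for the $d$-slice must satisfy $c \geq c' \log d$ for a universal constant $c'$, i.e.\ a $2^{\Omega(n \log d)}$ lower bound, not $2^{\Omega(dn)}$. Indeed a $2^{\Omega(dn)}$ lower bound is impossible, since brute force already runs in $d^n = 2^{n \log d}$ time. This does not break your proof---simply take $d_0 > 2^{c/c'}$ instead of $d_0 > c/\varepsilon$---but you should quote the correct bound.
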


\begin{proof}
 We begin by proving the following claim: for every $c \geq 1$ there exists a $d \geq 1$ such that $p$-dom-SparseBinCSP$_d$ is not solvable in $2^{cn + \Ordo(1)} \cdot |x|^{\Ordo(1)}$ time (recall that $P_d$ is the $d$th slice of $P$ with respect to $d$) . First, from Traxler~\cite{traxler2008} it follows that any algorithm which solves $p$-dom-SparseBinCSP$_d$ in $2^{cn}$ time must satisfy $\log d \cdot c' \leq c$ for a universal constant $c' \geq 1$\footnote{Throughout the paper Traxler assumes the slightly stronger statement {\em randomized ETH} (r-ETH) but only the ETH is needed for the binary, sparse CSP problem.}.
    Furthermore, observe that there for every $c$ exists $c' \geq c$ such that $2^{cn + \Ordo(1)} \cdot |x|^{\Ordo(1)}$ is bounded by $2^{c'n}$, i.e., that a running time of the form  $2^{cn}$ easily allows us to hide polynomial factors of the input size, and additive constants in the exponent, since the number of constraints is linearly bounded by the number of variables in the $p$-SparseBinCSP$_d$ problem, and since $d$ is fixed. Thus, the claim easily follows. 
    
    Next, assume that $p$-SparseBinCSP is in $\E$, i.e., that there exists $c \geq 1$ such that the problem is solvable in $2^{cn + \Ordo(1)} \cdot |x|^{\Ordo(1)}$ time. It follows that $p$-dom-SparseBinCSP$_d$ for every fixed $d$ is solvable in $2^{cn + \Ordo(1)} \cdot |x|^{\Ordo(1)}$ time, which contradicts the above claim.
    
    Last, assume that $p$-dom-SparseBinCSP is in $\FPE$, i.e., that there exists a computable function $f \colon \mathbb{N} \rightarrow \mathbb{N}$ and a $c \geq 1$ such that the problem is solvable in $f(k) \cdot 2^{cn} \cdot |x|^{\Ordo(1)}$ time. Thus, pick a $d \geq 1$ such that $p$-dom-SparseBinCSP$_d$ is not solvable in $2^{cn} \cdot |x|^{\Ordo(1)}$ time (via the above claim). It follows that the proposed algorithm solves $p$-dom-SparseBinCSP$_d$ in $f(d) \cdot 2^{cn} \cdot |x|^{\Ordo(1)}$ time, and, since $d$ is fixed, the expression $f(d)$ can easily be hidden in the polynomial factor $|x|^{\Ordo(1)}$.
\end{proof}

While not difficult to prove, Theorem~\ref{thm:traxler} provides a good starting point for analysing the remaining parameterisations of finite-domain CSPs.

\begin{theorem} \label{thm:csp_examples} 
Assume that the ETH is true. Then:
  \begin{enumerate}
  \item
    $p$-dom-CSP is in $\x{\E}$ but not in $\FPE$,
  \item
    $p$-max-arity-CSP is not in  $\x{\E}$,
  \item
    $p$-max-degree-CSP is not in $\FPE$, and
  \item
    $p$-max-cardinality-CSP is in $\x{\E}$ but not in $\FPE$.
  \end{enumerate}
\end{theorem}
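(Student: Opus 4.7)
The plan is to derive all four statements from Theorem~\ref{thm:traxler} together with the structural machinery of Theorems~\ref{thm:reductions} and~\ref{thm:non_inclusions}. Parts (1) and (4) split into an upper bound (membership in $\x{\E}$) proved by brute-force enumeration, and a lower bound (exclusion from $\FPE$) obtained through an $\FPE$-reduction from $p$-dom-SparseBinCSP. Parts (2) and (3) are pure lower bounds, obtained by transporting the hardness of $p$-SparseBinCSP or $p$-dom-SparseBinCSP into the new parameterisation using the identity map as a reduction.

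For the two upper bounds I would argue directly. For $p$-dom-CSP, with $k = |\mathrm{dom}((V,C))|$ and $n = |V|$, exhaustive enumeration of all $k^{n}$ assignments gives an algorithm in $k^{n}\cdot|x|^{\Ordo(1)}$ time, which is of the $\x{\E}$ form. For $p$-max-cardinality-CSP I first observe that if every relation in the instance has at most $k$ tuples, then each variable $x$ occurring in at least one constraint $R(\mathbf{x}) \in C$ is forced to take a value appearing in the column of $R$ indexed by $x$, giving at most $k$ admissible values per such variable; variables occurring in no constraint may be assigned arbitrarily without affecting satisfiability. Hence an exhaustive enumeration of at most $k^{n}$ candidate assignments yields an $\x{\E}$ algorithm.

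For the lower bounds I would reuse the same instance across all reductions and simply relabel which quantity plays the role of the coarse-grained parameter. The exclusion $p$-dom-CSP $\notin \FPE$ follows because the identity is trivially an $\FPE$-reduction from $p$-dom-SparseBinCSP (same instance, same parameter), together with Theorems~\ref{thm:traxler}(2) and~\ref{thm:reductions}(2). For $p$-max-degree-CSP $\notin \FPE$, the same identity is an $\FPE$-reduction from $p$-dom-SparseBinCSP because the latter guarantees max-degree at most $3d^{2}$, so the new parameter is bounded by the computable function $g(d) = 3d^{2}$. Analogously, for $p$-max-cardinality-CSP $\notin \FPE$, a binary relation over a domain of size $d$ has cardinality at most $d^{2}$, supplying the bound $g(d) = d^{2}$. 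Finally, for $p$-max-arity-CSP $\notin \x{\E}$, I would invoke Theorem~\ref{thm:non_inclusions}(2): the slice at $k_{0} = 2$ contains $p$-SparseBinCSP via the identity $\E$-reduction (since $p$-SparseBinCSP instances satisfy $\text{max-arity} = 2$ by definition), and $p$-SparseBinCSP $\notin \E$ by Theorem~\ref{thm:traxler}(1).

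The argument has no serious technical obstacle once Theorem~\ref{thm:traxler} is granted: each lower-bound case reduces to verifying that the new parameter is bounded by a computable function of the old one under the identity map, which is routine. The subtlest point is the $p$-max-cardinality-CSP upper bound, where one has to notice that bounding the cardinality of the relations effectively bounds the admissible values per occurring variable, so the problem degenerates into an effective small-domain enumeration rather than requiring a genuinely new algorithm.
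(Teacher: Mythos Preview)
Your proposal is correct and mirrors the paper's proof almost exactly: the same identity-map $\FPE$-reductions from $p$-dom-SparseBinCSP (with parameter bounds $d$, $3d^{2}$, $d^{2}$) for the three $\FPE$ exclusions, the same $\E$-reduction into the slice $k_{0}=2$ for the $\x{\E}$ exclusion, and the same brute-force enumeration for the $p$-dom-CSP upper bound. The only cosmetic difference is in the $\x{\E}$ upper bound for $p$-max-cardinality-CSP: the paper phrases it as a branching algorithm (pick a variable in some constraint, branch on its at most $k$ possible values), whereas you precompute for each occurring variable a set of at most $k$ admissible values and enumerate globally; both yield the same $k^{n}\cdot|x|^{\Ordo(1)}$ bound and rest on the identical observation that a relation with at most $k$ tuples has at most $k$ distinct values in each column.
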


\begin{proof}
  We prove each statement in turn.
\begin{enumerate}
    \item The inclusion in $\x{\E}$ is trivial by enumerating all possible assignments with respect to the parameter $k$. For the lower bound, assume that the problem is solvable in $f(k) \cdot 2^{cn} \cdot |x|^{\Ordo(1)}$ time for some constant $c$. We then have a trivial $\FPE$-reduction from $p$-dom-SparseBinCSP to $p$-dom-CSP, which (using Theorem~\ref{thm:reductions}) contradicts Theorem~\ref{thm:traxler}.
    \item
    It is easy to see that $p$-SparseBinCSP is a restricted case of the slice $p$-max-arity-CSP$_2$. Hence, we have a trivial $\E$-reduction from the former to the latter ($p$-SparseBinCSP $\ered$ $p$-max-arity-CSP$_2$) which by Theorem~\ref{thm:non_inclusions} and Theorem~\ref{thm:traxler} implies non-inclusion in $\x{\E}$ under the ETH.
    \item
     We have a simple reduction from $p$-dom-SparseBinCSP to $p$-max-degree-CSP which changes the parameter $d$ to $3d^2$, i.e., $p$-dom-SparseBinCSP $\parared$ $p$-max-degree-CSP.
    \item
    To show inclusion in $\x{\E}$ we use a branching algorithm: pick a variable $x_i$ occurring in a constraint $R(x_1, \ldots, x_i, \ldots, x_n)$. Since $|R| \leq k$ we have to try at most $k$ possibilities for $x_i$ and obtain a branching factor of $k$. Since the depth of the tree is at most $n$ we obtain a running time of $k^n \cdot |x|^{\Ordo(1)}$. For non-inclusion in $\FPE$, note that a binary relation over a domain with $d$ elements contains at most $d^2$ tuples. It then follows that we have an $\FPE$-reduction from $p$-dom-SparseBinCSP to $p$-max-cardinality-CSP which changes the parameter $d$ to $d^2$, i.e., $p$-dom-SparseBinCSP $\parared$ $p$-max-cardinality-CSP.
\end{enumerate}
\end{proof}

It may be interesting to note that all these bound are tight, with the possible exception of $p$-max-degree-CSP, where we currently do not know whether it is included in $\x{\E}$.

\subsection{Partially Ordered Time}
\label{sec:partial}

The first major temporal reasoning problem we consider is the \textsc{partially ordered time} problem where we 
use a form of \textit{width} as the coarse-grained parameter.

\begin{definition}
A partial order $P = (S, \leq_P)$ is said to have \emph{\ew} $k$ if $S$ can be partitioned into at most $k+2$ disjoint subsets $(S_1,\dots,S_k,S_<,S_>)$ such that:
\begin{enumerate}
    \item
        at least two of the sets $(S_1,\dots,S_k,S_<,S_>)$ are non-empty, unless all elements are equal,
    \item
        for all $x\in S_<$ there is a $S_i$ such that for all $y\in S_i$ then $x\leq_Py$ and not $y\leq_Px$,
    \item
        for all $x\in S_>$ there is a $S_i$ such that for all $y\in S_i$ then $x\geq_Py$ and not $y\geq_Px$,
    \item
        for all pairs $x\in S_<$, $y\in S_>$ with $x\leq_Py$, then there is a $S_i$ such that for all $z\in S_i$ then $x\leq_P z \leq_Py$, 
    \item
        all partial orders $P[S_1],\dots,P[S_k],P[S_<],P[S_>]$ have {\ew} $k$, and
    \item
        for every pair $S_i,S_j\in\{S_1,\dots,S_k\}$, $i\neq j$, all $x\in S_i$ and all $y\in S_j$ are incomparable in $P$.
\end{enumerate}
The collection of sets $S_1,\dots,S_k$ is called the \emph{waist} of $P$. 

\end{definition}
 {\Ew} is a weaker constraint than width (the size of the largest anti-chain), but is likely a stronger condition  than the \emph{dimension} of a partial order, i.e., the least number of total orders whose intersection equals the partial order.

\multiproblemDef{Partially Ordered Time of {\ew} $k$}
{A partially ordered time instance $(V,C)$.}
{$n = |V|$, $k \in \mathbb{N}$.}
{Is there a partial order $P$ with {\ew} $k$ which satisfies the instance?}

We introduce the following two definitions to simplify the notation in the forthcoming algorithm. 

\begin{definition}
Let \emph{$\mathbf{B}(n,k)$} be the value of the $k$-th bit of the bit-string representing the integer $n$.
\end{definition}

\begin{definition}
Given a binary relation $\odot$, a variable $x$ and a set $V$, we say that $x\odot V$ if $x\odot v$ for all $v\in V$.
\end{definition}

This notation easily extends to pairs of sets.
The basic idea is then to recursively construct a partial order $P$ of {\ew} $k$ by enumerating all possible waists and relations between the other variables and said waist.
This enumeration is then combined with dynamic programming to keep track of already solved subproblems.
First, observe that each element not in a waist can (independently) be either $<$ (or $>$) or incomparable to each of the $k$ sets of said waist. If we are relating to two waists at once we then obtain $4^k$ possibilities.

\begin{theorem}
Partially Ordered Time of {\ew} $k$ is solvable in $16^{kn} \cdot |x|^{\Ordo(1)}$ time, and is hence in $\x{\E}$.
\end{theorem}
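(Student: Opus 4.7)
The plan is to design a recursive algorithm $\mathsf{Solve}(V')$ that mirrors the recursive structure of the effective width definition and memoizes on the subset $V' \subseteq V$ currently under consideration. The procedure decides whether the sub-instance $I[V']$ admits a satisfying partial order of effective width $k$. The base cases $|V'| \le 1$ are trivial. Otherwise, $\mathsf{Solve}(V')$ enumerates candidate decompositions of $V'$ into a labelled tuple $(S_1, \ldots, S_k, S_<, S_>)$ satisfying the structural conditions of the definition, checks compatibility with every constraint in $C_{V'}$, and recursively invokes itself on every non-empty part. A hash table keyed by $V'$ guarantees that each subset is processed at most once across the whole recursion.

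To bound the per-call work I would parameterise decompositions by assigning each variable $x \in V'$ a labelled signature: either $x$ lies in one of the $k$ waist classes, or $x \in S_< \cup S_>$, in which case its relation to every waist set $S_i$ is fixed to one of $<$, $>$, or $||$, subject to the definition's constraints (some $<$ forces $x \in S_<$, some $>$ forces $x \in S_>$, and the two cannot coexist). By the observation in the paragraph preceding the theorem, this leaves at most $4^k$ signatures per variable and hence at most $(4^k)^{|V'|}$ decompositions per call. Verification is then local: for each constraint $c(x,y) \in C_{V'}$ with $x,y$ placed in different classes of the decomposition, the signatures determine the induced relation and we test that it belongs to $c$; constraints between variables placed in the same class are deferred to the recursive calls.

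The main obstacle is the complexity analysis, which I would settle by summing the per-call work across all memoized subsets:
\[
\sum_{V' \subseteq V} (4^k)^{|V'|} \cdot |x|^{\Ordo(1)} \;=\; (1 + 4^k)^n \cdot |x|^{\Ordo(1)} \;\le\; 16^{kn} \cdot |x|^{\Ordo(1)},
\]
where the final inequality uses $1 + 4^k \le 16^k$ for every $k \ge 1$. Correctness is proved by induction on $|V'|$ directly from the definition of effective width: $I[V']$ is satisfiable by some effective-width-$k$ partial order iff there exists a valid decomposition whose cross-class constraints agree with $C_{V'}$ and whose parts are themselves satisfiable. The resulting running time is of the form $f(k)^n \cdot |x|^{\Ordo(1)}$ with $f(k) = 16^k$, so membership in $\x{\E}$ follows. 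The subtle point I expect to handle carefully is that the $4^k$-per-variable bound really captures all information a call needs to pass to its children, so that the subset $V'$ alone is a sufficient DP key and no additional bookkeeping of cross-class relations inside a waist is needed.
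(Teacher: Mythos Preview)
Your memoization key is too coarse, and the algorithm as described is unsound. When you recurse on $S_<$ (and likewise $S_>$) you have already committed each $x\in S_<$ to a signature $\sigma_x$ specifying, for every waist block $S_i$, whether $x<S_i$ or $x\,||\,S_i$. These signatures impose transitivity constraints \emph{among the elements of $S_<$ themselves}: if $\sigma_a$ says $a\,||\,S_2$ while $\sigma_b$ says $b<S_2$, then the inner partial order on $S_<$ must not set $a<b$ (else $a<b<S_2$ contradicts $a\,||\,S_2$). Your call $\mathsf{Solve}(S_<)$ receives only the set $S_<$, not $\sigma$, so it is free to return a witness violating exactly this. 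Concretely, take $k=2$, variables $a,b,w_1,w_2$, and constraints $a<w_1$, $a\,||\,w_2$, $b\,||\,w_1$, $b<w_2$, $w_1\,||\,w_2$, $a<b$. With waist $S_1=\{w_1\}$, $S_2=\{w_2\}$ and $S_<=\{a,b\}$ all cross-class constraints match the signatures, and $\mathsf{Solve}(\{a,b\})$ accepts (the chain $a<b$ has effective width at most $2$); yet the instance is unsatisfiable, since $a<b<w_2$ forces $a<w_2$. Hence your claimed equivalence ``$I[V']$ is satisfiable iff some valid decomposition has its parts satisfiable'' fails in the $\Leftarrow$ direction.

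The paper avoids this by making the signatures part of the DP state: its recursive function $\mathcal{X}$ takes as input not a single set but a labelled partition $(S_1,\dots,S_{4^k})$ of the current variables, the $4^k$ labels recording how each variable relates to the \emph{parent} call's waist. When $\mathcal{X}$ then chooses its own waist and new partition it can reject any choice that, combined with the incoming labels, would break transitivity. This is precisely the ``additional bookkeeping'' you hoped to dispense with in your last sentence; carrying it raises the number of DP states from $2^n$ to roughly $(4^k)^n$, and the per-state branching contributes another $(k+2\cdot 4^k)^n$, which together the paper bounds by $16^{kn}$. Your sum $\sum_{V'\subseteq V}(4^k)^{|V'|}=(1+4^k)^n$ is correct arithmetic, but it is attached to an algorithm that accepts no-instances.
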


\begin{proof}
For a \textsc{Partially Ordered Time} instance $(V,C)$ of {\ew} $k$
we define a recursive function $\mathcal{X}$, taking as input $4^k$ pairwise disjoint subsets of $V$, $S_1,\dots,S_{4^k}$ as follows.
If only one $S_i \neq \emptyset$, check if the assumption that all variables of that set are equal is consistent with $C$. This can easily be done in polynomial time.
If yes, the instance is accepted, otherwise we continue as follows.
We introduce two temporary waists of size $k$: $\{T^<_1,\dots, T^<_k\}$ and $\{T^>_1,\dots,T^>_k\}$.
Assume that \mbox{$T^<_i <_P S_j \Leftrightarrow \mathbf{B}(j,2i-1)=1$} and \mbox{$T^>_i<_P S_j \Leftrightarrow \mathbf{B}(j,2i)=1$}. 
Partition $\bigcup S_i$ into the new sets $N^W_i$ for $i\in\{1,\ldots,k\}$, and $N^<_i$, $N^>_i$ for \mbox{$i\in\{1,\ldots,4^k\}$}, 
such that no constraint is broken under the following assumptions:

\begin{enumerate}
\item if $\mathbf{B}(i,2j)=1$ then $N^<_i<_PN^W_j$ and $N^>_i <_P T^>_j$, 
\item if $\mathbf{B}(i,2j-1)=1$ then $T^<_j<_P N^<_i$ and $N^W_j<_PN^>_i$,
\item all pairs $N^W_i$ and $N^W_j$ are incomparable if $i\neq j$.
\end{enumerate}
    
Note that this means that for each $i$ there is a $j$ such that $N^W_i\subseteq S_j$, and since every variable not in a waist must relate to the waist, some sets must always be empty.
For each partition, recursively check that $I[N^W_i]$, $I[\bigcup N^<_i]$ and $I[\bigcup N^>_i]$ are 'yes'-instances for all $i$ (when they exist) under the assumptions just made. 
If one partition is found that is consistent with $C$ and $\mathcal{X}(N^<_1,\dots,N^<_{4^k})$, $\mathcal{X}(N^W_1,\emptyset,\dots)$, $\dots$, $\mathcal{X}(N^W_k,\emptyset,\dots)$ and $\mathcal{X}(N^>_1,\dots,N^>_{4^k})$ all accept, then accept $\mathcal{X}(S_1,\dots,S_{4^k})$.

For $|\bigcup S_i|\leq1$ the algorithm is clearly correct.
Assume that the algorithm is correct for all $|\bigcup S_i|<n$.
Now, for correctness of $|\bigcup S_i|=n$, we start with soundness.
We brute-force enumerate all partitions and reject a partition if there is any constraint not consistent with said partition.
For every constraint $c(x,y)\in C$, there will at some point be a partition where $x$ and $y$ are not part of the same recursive call to $X$.
At that point we will know the exact relation between $x$ and $y$, and hence we can check if this relation is accepted by the constraint $c(x,y)$.
Hence, if we answer 'yes', all constraints will be satisfied.
To see that we find an partial order, the question boils down to knowing transitivity holds correctly.
Since we reject all partitions that breaks our assumptions, we will never accept $x<_Py$ if we have earlier assumed that $z<_Px$ and that $y$ and $z$ are incomparable.
The same is true for the symmetric case when $z<_Px$ is assumed by some earlier partition and $x<_Py$ by the current partition. 
Also $x\in N^W_i$ and $y\in N^W_j$ would only be accepted it $x$ being incomparable to $y$ is accepted by our constraint $c(x,y)$.
Hence, these assumptions are enough to ensure that we actually obtain a partial order.
For {\ew} we follow the definition and in each step make sure that:
everything not part of the waist is related to some part of the waist,
all parts of the waist are incomparable to every other part of the waist, and 
that all of $I[\bigcup N^<_i]$, $I[N^W_1]$, $\dots$, $I[\bigcup N^W_k]$, and $I[\bigcup N^>_i]$ have {\ew} $k$.
Hence, we will only answer 'yes' if we are given a 'yes'-instance.
For completeness, assume there is some partial order $P_A$ of {\ew} $k$ satisfying $I$.
Since $P_A$ exists, the waist $N^W_1,\dots,N^W_k$ exists and we will find it and the correct partitions to give as inputs to the subproblems, when brute-force testing all possible partitions.
Since the algorithm is assumed to be correct for every problem smaller than $n$, it must be correct (and thus complete) for these subproblems.

By standard dynamic programming practices of storing results for $X$ for previously calculated inputs
the running time of the algorithm is dominated by the number of partitions times the cost for each step.
At each step, the only non-polynomial cost is the partitioning, which can be done in $(k+2\cdot4^k)^n$ different ways, yielding this as a multiplicative factor to the total number of possible inputs, and hence a total run-time of $(4^{k}\cdot(k+2\cdot4^k))^n \cdot |x|^{\Ordo(1)}$, or $16^{kn} \cdot |x|^{\Ordo(1)}$, showing inclusion in $\x{\E}$.
\end{proof}

Let us remark that this algorithm solves the more general problem of counting the number of solutions.
Naturally, it is straightforward to modify the algorithm to actually return a solution,  e.g., via  back-pointers, or any other standard methods to iteratively build a solution.

\subsection{Allen's Interval Algebra}
\label{sec:allen}

The second major qualitative reasoning problem we study is a variant of \textsc{Allen's interval algebra} where we restrict the number of intervals any interval may \emph{overlap} with.

\begin{definition}
We say that two intervals $I$ and $J$ overlap if there is a point $x$ such that $I^-<x<I^+$ and $J^-<x<J^+$.
\end{definition}

\multiproblemDef{$k$-CSP$(\mathcal{A})$}
{A CSP$(\mathcal{A})$ instance $I=(V,C)$.}
{$n = |V|$, $k \in \mathbb{N}$.}
{Is there a satisfying assignment where no interval overlaps with $k$ or more intervals?}

We will prove that $k$-CSP$(\mathcal{A})$ is solvable in roughly $(2kn)^{2k}\cdot 2^n$ time and space, and thus prove membership in $\FPE$. Our algorithm makes use of a relationship between satisfying assignments and so-called {\em ordered partitions}.

\begin{definition}
A finite sequence of non-empty finite sets $(S_1,\ldots,S_\ell)$ is an \emph{ordered partition} of a set $S$ if $S_1,\ldots,S_\ell$ is a partitioning of $S$.
The \emph{ranking function} $r$ is the function $r \colon S \rightarrow \{1, \ldots , \ell\}$ such that $r(x)=i$ for every $x \in S_i$ and every $i\in \{1, \ldots , \ell\}$.
The number of unique ordered partitions for a set of size $n$ is given by the \emph{Ordered Bell Number} $\mathrm{OBN}(n)$ and is strictly less than the number of possible ways to arrange $n$ items into $n$ different sets if $n>1$, $n^n$.
\end{definition}

It is known that CSP$(\mathcal{A})$ over $V$ is satisfiable if and only if there exists an ordered partition of $V^- \cup V^+$ and a ranking function $r$, such that, for any relation \mbox{$\odot \in \{<,>,=\}$}, \mbox{$r(x) \odot r(y)$} if and only if $f(x) \odot f(y)$ for all \mbox{$x,y\in V^- \cup V^+$}~\cite{lagerkvist2021b}. To simplify the notation we allow ordered partitions to (temporally) contain empty sets $S_i$ and we then assume that $(S_1,\ldots,S_i,\ldots,S_\ell)=(S_1,\ldots,S_{i-1},S_{i+1},\ldots,S_\ell)$.
We will explicitly make sure this step is taken, but allowing this slight bending of the definition lets us avoid some unnecessarily complex steps.

\begin{definition}
Given a sequence (such as e.g. an ordered partition) $(x_1,\dots,x_i)$ let the notation $(x_1:x_i)$ denote this sequence.
Let $(x_1:x_i-x_j)=(x_1,\dots,x_{j-1},x_{j+1},\dots,x_i)$, $j\in\{1,\ldots,i\}$.
\end{definition}

\begin{theorem}
$k$-CSP$(\mathcal{A})$ is solvable in $(2nk)^{2k}\cdot 2^n \cdot |x|^{\Ordo(1)}$ time, and is hence in $\FPE$.
\end{theorem}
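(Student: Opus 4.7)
The plan is to exploit the ordered-partition characterisation from~\cite{lagerkvist2021b}: a CSP$(\mathcal{A})$ instance $(V,C)$ is satisfiable iff there is an ordered partition of $V^-\cup V^+$ whose ranking function $r$ induces an atomic Allen relation in $c(X,Y)$ for every pair of intervals $X,Y$. The key structural observation is that two intervals $X,Y$ overlap in the sense of the theorem exactly when they are simultaneously open between some pair of consecutive buckets, i.e.\ $r(X^-)\leq i<r(X^+)$ and $r(Y^-)\leq i<r(Y^+)$ for a common $i$; hence the $k$-overlap restriction is equivalent to the requirement that at every bucket boundary at most $k$ intervals are open. This invariant is what allows the sweep over the partition to be carried out with a bounded ``live'' set.

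I would then build the ordered partition bucket by bucket via dynamic programming. The state has the form $(F,O,\sigma)$, where $F\subseteq V$ is the set of already fully placed intervals (contributing the $2^n$ factor), $O\subseteq V\setminus F$ with $|O|\leq k$ is the set of currently open intervals, and $\sigma$ encodes the ordered partition of $O$ by the ranks at which their starts were placed, together with the integer rank in $\{1,\ldots,2n\}$ of each such block. A transition from $(F,O,\sigma)$ to $(F\cup C,(O\setminus C)\cup N,\sigma')$ appends one new bucket: one chooses $C\subseteq O$ to close in it and $N\subseteq V\setminus(F\cup O)$ to open in it, subject to $|(O\setminus C)\cup N|\leq k$, and verifies in polynomial time every constraint whose atomic Allen relation is fully determined by the new bucket, rejecting if the induced relation is not in $c(X,Y)$. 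The base is $(\emptyset,\emptyset,\varnothing)$ and we accept iff $(V,\emptyset,\varnothing)$ is reachable.

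Counting the state-transition product then gives the bound: there are $2^n$ choices for $F$, at most $\binom{n}{k}\cdot(2n)^k\cdot\mathrm{OBN}(k)\leq(2nk)^k$ configurations for $(O,\sigma)$, and, per state, at most $2^k\cdot n^k\leq(2nk)^k$ choices of the pair $(C,N)$; multiplying together and absorbing polynomial work per transition yields $(2nk)^{2k}\cdot 2^n\cdot|x|^{\Ordo(1)}$ and hence membership in $\FPE$. Once this bound is established, the counting variant follows by summing over transitions instead of taking an OR.

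The hard part will be to show that the compressed state $(F,O,\sigma)$ really does carry enough information to verify every constraint at some transition. When $X$ closes, the atomic Allen relation between $X$ and each $Y\in F$ depends on comparing $r(X^-)$ with $r(Y^-)$ and $r(Y^+)$, which is not immediately visible in the state. I expect to resolve this by verifying each constraint greedily at the earliest moment at which its relation is forced: whenever $X$ opens and $Y\in F$ the relation is forced to ``$Y<X$''; whenever $X$ opens in the same bucket as $Y\in C$ closes, the relation is forced to ``$Y \, m \, X$''; whenever both intervals close in the same bucket the relation reduces to a comparison among the integer bucket ranks of their starts (already recorded in $\sigma$); and whenever $X$ closes strictly after $Y\in F$, the comparison again reduces to the bucket rank of $X^-$ stored in $\sigma$ against the bucket ranks of $Y^-,Y^+$, which one can arrange to have been transferred from $Y$'s own closing transition by annotating $F$ with a constant amount of additional information per interval. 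Once this bookkeeping is in place, soundness and completeness reduce directly to the ordered-partition characterisation.
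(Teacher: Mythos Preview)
Your sweep-based dynamic programming is the same scheme the paper uses, but two points do not go through.

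First, the claimed equivalence between the $k$-overlap restriction and ``at most $k$ intervals open at every boundary'' is false; the latter is only a consequence of the former. Take one long interval $A$ with $r(A^-)=1$, $r(A^+)=10$ and short intervals $B_1,\ldots,B_4$ with start/end ranks $(2,3),(4,5),(6,7),(8,9)$: at every boundary at most two intervals are open, yet $A$ overlaps with all four $B_i$. Your state $(F,O,\sigma)$ cannot detect that an open interval has already exhausted its overlap budget, so the algorithm as written accepts instances the problem rejects. The paper repairs this by carrying, alongside the ordered partition $X$ of open start-points, a vector $v$ recording for each block of $X$ how many further overlaps it may still incur; this costs an extra factor bounded by $k^k$ in the state space (absorbed into $(2nk)^{2k}$) and is what actually enforces the per-interval bound.

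Second, your ``hard part'' is a symptom of checking constraints too late. Whenever $X$ and $Y$ share an open gap, their Allen relation is fully determined at the step where the \emph{first} of the two closes: both start ranks are still present in the ordered partition of open intervals, the closing rank is the current bucket, and the remaining endpoint is known to lie strictly later. The paper checks each constraint at exactly that moment (or when the later interval opens, in the precedes/meets cases), so nothing about intervals in $F$ beyond set membership is ever needed. Your proposed remedy of annotating each $Y\in F$ with its rank data would multiply the state count by a factor of order $(2n)^{2|F|}$ and destroy the $2^n$ term in the stated bound. As a minor aside, the inequality $\binom{n}{k}\cdot(2n)^k\cdot\mathrm{OBN}(k)\le(2nk)^k$ fails whenever $n>k$; dropping the unnecessary absolute ranks from $\sigma$ and adding the overlap-budget vector instead is what makes the paper's count work.
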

\begin{proof}
The algorithm is reminiscent of the one presented in~\cite{lagerkvist2021b} and works by the following principle:
by introducing intervals from smallest starting-point, to largest end-point, we only need to keep track of intervals already passed, current open intervals, how many other intervals these has already overlapped with, and intervals not yet opened. 
Given an $k$-CSP($\mathcal{A}$) instance $I=(V,C)$, for inputs $V^-_1$ (a set), $X=(X_1:X_i)$ (a partial order), $v=(v_1:v_i)$ (a collection of values) and $V^-_2$ (a set) we then define a recurrence relation $R(V^-_1,X,v,V^-_2)$ as follows:
\begin{itemize}
    \item $R(V^-_1,(X_1:X_i-X_j),(v_1:v_i-v_j),V^-_2)$, if $X_j=\emptyset$, 
    \item $0$, if $v_j<0$ for any $j\in\{1,\dots,i\}$,
    \item $1$, if $V^-_1 = V^-$, and 
    \item $\sum_{x^-\subseteq\bigcup X_j, y \subseteq V^-_2 } f(V^-_1,X,v,V^-_2,x^-,y)$ otherwise.
\end{itemize}
If two cases are applicable for the same input, the first, as ordered here above, have the highest priority.

Using a set $x^-$ and the sequence \mbox{$X=(X_1 : X_i)$} define the sets

\[x^+=\{x^+_j\,|\,(x^-_j,x^+_j)\in V \land x^-_j\in x^-\}\]

and

\[X^+=\{x^+_j\,|\,(x^-_j,x^+_j)\in V \land x^-_j\in \bigcup X_l\setminus{x^-}\},\]

the sequences

\[X'=(X_1\setminus{x^-}:X_i\setminus{x^-},y)\]

and

\[v'=(v_1-|x^-|:v_i-|x^-|,k-|\bigcup X_j\setminus{x^-}|)\]

and the function $f$ as

\[
f(V^-_1,X,v,V^-_2,x^-,y) = R(V^-_1\cup x^-,X',v',V^-_2\setminus{y})
\]

if and only if the assumptions \mbox{$(X_1:X_i,y\cup x^+)$} and $u<v$ for all $u\in\bigcup X_j\cup y\cup x^+$, $v\in V^-_2\cup X^+$ is consistent with $C$.
I.e. for any constraint $c^+(u,v)=c^-(u,v)\in C$ ($\mathcal{A}$ is closed under symmetry, so the constraints are equal) with $v^-\in \bigcup X_j \cup V^-_2$ and $u^+\in x^+$ then the assumption satisfies $c^+(u,v)$.
Otherwise $f$ outputs $0$.

Our algorithm for the decision problem is simply to check if $R(\emptyset,(),(),V^-)>0$.
To see that this approach is sound we follow a sequence of $R$-recurrences from $R(\emptyset,(),(),V^-)$ to $R(V^-,(),(),\emptyset)$ such that in each step we have that $R>0$.
Take the ordered partitions that are part of this sequence, and merge them into a single ordered partition $(Y_1:Y_n)$ such that if $x<y$ in some $(X_1:X_l, V^-_2)$ for some \mbox{$R(V^-_1,(X_1:X_l),v,V^-_2)$}, then $x<y$ in $(Y_1:Y_n)$.
Since $R$ works by moving starting-points from the set $V^-_2$ to $X$ and then further to $V^-_1$, while keeping track of relevant previous intervals
$(Y_1:Y_n)$ exists and is uniquely defined for each such sequence.
The positions of the end-points in $(Y_1:Y_n)$ is implicit from the relation between $x^-$ and $x^+$ in our definition of $f$.
Let $u^-\in Y_{i^-}$, $u^+\in Y_{i^+}$, $u'^-\in Y_{j^-}$ and $u'^+\in Y_{j^+}$.
Assume ${i^+}\leq {j^+}$, then for each constraint $c(u,u')\in C$ in our sequence of $R$-recurrences, there is an $R(V^-_1,X,v,V^-_2)$ followed by $R(V'^-_1,X',v',V'^-_2)$ such that $u^-\in V'^-_1$ and $u^-\not \in V^-_1$.
From these two inputs, we know the relations between $i^-$, $i^+$, $j^-$ and $j^+$ (since $j^- < j^+$ must hold).
Since we know these relations, and since the sequence exists, $f$ must have accepted that $c(u,u')$ is holds.
This is true for all $c\in C$, so $(Y_1:Y_n)$ must be a solution for $I$.
To verify that this solution also fulfills the overlap requirement, we at each step have $v$ which is handled such that no value in $V$ never drops below $0$.
Hence we have a 'yes'-instance if $R(\emptyset,(),(),V^-)>0$.

For completeness, take an ordered partition \mbox{$Y=(Y_1:Y_i)$} over $V^-\cup V^+$ satisfying our arbitrary instance $I$.
Given this ordered partition, iterative construct a sequence of $R(V^-_1,X,v,V^-_2)$ and and choose $x^-$ and $y$ such that \mbox{$x^+=Y_j\cap V^+$} and $y = Y_j\cap V^-$, starting from $R(\emptyset,(),(),V^-)$ and $Y_1$, working towards $Y_i$ and $R(V^-,(),(), \emptyset)$. 
Now, observe that for \mbox{$X=(X_1:X_{i'})$} we have that $X_1\subseteq Y_l$, $\dots$, $X_j\subseteq Y_{l+i'-1}$, and \mbox{$V^-_2 = V^-\cap \bigcup_{u=l+i'}^i Y_{u}$}, which means that when placing $Y_l$, we will know the relations between all intervals ending in $Y_l$, and all intervals ending later.
Since these relations are the same as those in $Y$, $f$ will always return a value greater than $0$.
Similarly, $v$ will never contain a value less than $0$, since the numbers in $v$ simply counts how many overlaps have occurred for corresponding intervals.
The ordering $(Y_1:Y_n)$ will also be tested as every sub-ordering of size at most $k$ will be found by brute-force enumeration.
So, if given a 'yes'-instance, we will reach $R(V^-,(),(), \emptyset)=1$ and so every other $R$ in our sequence (including $R(\emptyset,(),(),V^-)$) we will have $R(V^-_1,X,v,V^-_2)>0$.
Hence, this approach is correct.

For the complexity there are fewer than $kn^k2^n \cdot \mathrm{OBN}(k)$ ways to partition $V^-$ into two unbounded sets and at most $k$ non-empty sets of a combined size of at most $k$.
Additionally there are at most $k^k$ ways these size bound partitions can overlap with previous intervals, and this also superimposes the factor $\mathrm{OBN}(k)$ since the ordering is precisely defined by the number of previously overlapped intervals.
Finally, there are at most $(2n)^k$ branches in each step, with a polynomial overhead.
This gives us a complexity of $(2nk)^{2k} 2^n \cdot |x|^{\Ordo(1)}$, proving membership in $\FPE$.
\end{proof}

This algorithm, similarly to the one in Section~\ref{sec:partial}, solves the more general problem of counting the number of solutions (i.e., the number of satisfying partial orders or ordered partitions)

\section{Concluding Remarks} \label{sec:discussion}

We explored how a multivariate complexity analysis can generate interesting cases with significant complexity improvements compared to the classical case of single parameters. This led to the introduction of the classes $\FPE$ and $\x{\E}$ which turned out to be the natural exponential-time analogoues of the well-known classes $\FPT$ and $\x{\mathbf{P}}$. We proved several fundamental properties of these classes and gave examples of $\x{\E}$ and $\FPE$ from qualitative reasoning.
These algorithms are significantly faster than existing methods (provided the coarse-grained parameters stay relatively small) and constitute an important breakthrough for single-exponential time algorithms in qualitative reasoning. 

\subsection*{Systematically classifying $\x{\E}$ and $\FPE$.}
The overarching open question is to classify problems and parameters in terms of $\x{\E}$ or $\FPE$ membership (or neither, under assumptions such as the ETH). Here, qualitative spatial reasoning problems (such as the \textsc{region connection calculus}) seem to be a promising continuation since the NP-hard cases in general are not known to be solvable in single-exponential time. More generally, is it possible to find canonical parameters which result in $\FPE$ algorithms, similarly to how parameters such as tree-width almost always results in $\FPT$ algorithms? Identifying such parameters could open up entirely new algorithmic approaches for qualitative reasoning, in particular, and infinite-domain CSPs, in general.

\subsection*{Limited versus unlimited equality.}
The algorithm in Section~\ref{sec:allen} for $k$-CSP($\mathcal{A}$) has similarities to the one in~\cite{MCDIARMID2003387} for the \textsc{channel assignment problem} with bounded edge-length, making the two problems interesting to compare.
One interpretation is that $k$-CSP($\mathcal{A}$) belongs to $\FPE$ because the problem bounds the number of intervals that may be equal (since they overlap), which is not true for \textsc{channel assignment}, which seemingly prevents our approach from proving $\FPE$ membership.
However, if one limits the size of equivalence classes in the same sense in the \textsc{channel assignment problem}, this new problem would naturally fall into $\FPE$, using just a slight modification of the original algorithm. Is this a problem specific behaviour, or is it a deeper difference between $\FPE$ and $\x{\E}$?

\subsection*{Extending and improving the algorithms.}
  The algorithm from Section~\ref{sec:partial} proves membership in $\x{\E}$ but can be asymptotically improved. 
  First, the algorithm ignores any potential symmetry between different inputs, which could be taken into account and improve the complexity by a factor of $k!$. This is in line with $\FPT$ algorithms of the form $f(k) \cdot |x|^{O(1)}$ where the function $f$ in a first attempt may be a tower of exponential, but which through clever algorithmic techniques in some cases can be improved to a single-exponential function $2^{ck}$ for reasonably small $c > 0$~\cite{DBLP:journals/siamdm/FominKLPS19}.
The algorithm in Section~\ref{sec:partial} and Section~\ref{sec:allen} can likely be improved to handle higher-arity constraints, although it would make the presentation and the analysis significantly more complex. 
To handle e.g. arity three constraints the number of sets to keep track of would almost double, as we would, for \ref{sec:partial}, also need to know how variables \emph{outside} of the temporary waists (those that in the algorithm are represented by $T_1^<,\dots,T_k^<$ and $T_1^>,\dots,T_k^>$) 
 relate to one of said waists.
 Another interesting direction is to consider different parameters: the algorithm from Section~\ref{sec:partial} can very likely be modify to use  the width of the partial order as the coarse-grained parameter instead of \ew, but it is less clear whether the dimension of a partial order is sufficient for an $\x{\E}$-algorithm.

\section*{Acknowledgements}
The first author is partially supported by the {\em National Graduate School in Computer Science} (CUGS), Sweden. The second author is partially supported by the Swedish Research Council (VR) under grant 2019-03690.

\bibliographystyle{plain}
\bibliography{main_arxiv}

\end{document}